\documentclass[a4paper]{article}

\usepackage{times}
\usepackage{mathtools}
\usepackage{amsthm}
\usepackage{amssymb}
\usepackage{cite}
\usepackage{tikz}
\usepackage{paralist}
\usepackage{fixme,manfnt,mparhack}
\usepackage[basic]{complexity}
\usepackage{authblk}
\usepackage[T1]{fontenc}
\usepackage{enumerate}
\usepackage{algorithm}
\usepackage{algorithmic}

\usetikzlibrary{positioning,fit,arrows,automata,calc,shapes} 
\tikzset{->,>=stealth',shorten >=1pt,shorten <=1pt,auto,node distance=2cm,
every loop/.style={looseness=6},
initial text={},
every fit/.style={draw,densely dotted,rectangle},
el/.style={font=\footnotesize},
va/.style={font=\scriptsize}}

\newcounter{fixcount}
\setcounter{fixcount}{0}

\newcommand{\defineNote}[3][black!65!green]{\expandafter\def\csname #2\endcsname
##1{\stepcounter{fixcount}\fxwarning{\textcolor{#1}{\textbf{#3}: ##1}}}}
\defineNote{PH}{Paul}
\defineNote{JFR}{Jean-Fran\c cois}

\newcommand{\val}[3]{{\tt v}^{#1}_{#2#3}}
\newcommand{\maxmax}{\max\!\max}
\newcommand{\maxmin}{\max\!\min}
\newcommand{\minmax}{\min\!\max}
\newcommand{\minmin}{\min\!\min}

\newcommand{\PG}{\textsc{Parity game}}
\newcommand{\xMP}{\textbf{MP}}

\newcommand{\st}{\mathrel{:}}

\newtheorem{lemma}{Lemma}
\newtheorem{corollary}{Corollary}
\newtheorem{theorem}{Theorem}
\newtheorem{conjecture}{Conjecture}

\title{Quantitative games with interval objectives}
\author{Paul Hunter \and Jean-Fran\c{c}ois Raskin}
\date{April 2014}

\begin{document}
\maketitle

\begin{abstract}
Traditionally quantitative games such as mean-payoff games and
discount sum games have two players -- one trying to maximize the
payoff, the other trying to minimize it.
The associated decision problem, ``Can Eve (the maximizer) achieve, for example, a
positive payoff?'' can be thought of as one player trying to attain a
payoff in the interval $(0,\infty)$.
In this paper we consider the more general problem of determining if a
player can attain a payoff in a finite union of arbitrary intervals for various payoff
functions (liminf, mean-payoff, discount sum, total sum).  In
particular this includes the interesting exact-value problem, ``Can Eve
achieve a payoff of exactly (e.g.) 0?''
\end{abstract}

\section{Introduction}

Quantitative two-player games on graphs have been extensively studied in the verification community~\cite{em79,cdhr10,Rei13,gs09,zp96}. Those models target applications in reactive system synthesis with resource constraints.  In these games two players, Eve and Adam, interact by moving a token around a weighted, directed graph, for a possibly infinite number of moves. This interaction results in a play which is an infinite path in the graph.  The value of the play is computed by applying a payoff function to the sequence of weights of the edges traversed along the path.  
Typical payoff functions are (lim)sup, (lim)inf, mean-payoff, (total) sum, and discounted sum. 

In the literature is usual to assume that Eve is attempting to maximize the payoff and Adam is attempting to minimize it.  In this context all these games are determined, that is the maximum that Eve can ensure is equal to the minimum that Adam can ensure, and this value can be computed in polynomial time for (lim)inf and (lim)sup~\cite{cdh13}, and in pseudo-polynomial time for mean-payoff, discounted sum, and total sum~\cite{zp96,gs09}.  The associated decision problem is the \emph{threshold problem}: Given a game graph, a payoff function and a threshold $\nu$ does Eve have a strategy to ensure all consistent plays have payoff at least $\nu$?   The threshold problems  for the aforementioned payoff functions are all closely related, and it is known that Eve and Adam can play optimally in those games with {\em memoryless strategies}~\cite{gz04}.  Consequently the decision problem for all those games is in $\NP \cap \coNP$. In fact, it can be shown in $\UP \cap \coUP$ for mean-payoff, discounted sum, and total sum, and in $\P\TIME$ for (lim)inf and (lim)sup.

The threshold problem can be seen as game in which Eve is trying to force the payoff to belong to the interval of values $[\nu,\infty)$.  In this paper we consider the more general problem of determining if a player can attain a payoff in a finite union of arbitrary intervals for the classical payoff functions mentioned above. That is, we are interested in the following question: Given a weighted arena $G$ and a finite union of real intervals, what is the complexity 
of determining if Eve has a winning strategy to ensure the payoff of any consistent play lies within the interval union?  In particular this includes the interesting exact-value problem: Can Eve achieve a payoff of exactly $\nu$?  Such objectives arise when considering efficiency constraints, for example can a system achieve a certain payoff without exceeding a certain target?  
We consider two versions of our problem depending on whether the numeric inputs (weights, interval bounds and discount factor) are given in binary or unary.  We also consider the memory requirements for a winning strategy both for Eve and Adam.  Our games are a natural subclass of multi-dimensional quantitative games (see e.g.~\cite{cdhr10}), however our results are largely incomparable with that paper as we consider a wider array of payoff functions and  our objective corresponds to \emph{disjunctions} of multi-dimensional objectives which were not considered.

\begin{table}[h]
\begin{center}
\small
\begin{tabular}{|l|c|c|c|}
\hline
Payoff type & Single interval & \multicolumn{2}{c|}{Multiple intervals}\\
\cline{3-4}
&&Binary&Unary\\
\hline\hline
Liminf/limsup & \P\TIME & $\NP\cap \coNP$& \PG-c\\
\hline
Mean-payoff & $\NP \cap \coNP$ & \PSPACE & \PG-hard\\
\hline
Discounted sum (non-singleton) & \multicolumn{2}{c|}{\PSPACE-c}&\P\TIME\\
\hline
Discounted sum (exact value) & \multicolumn{2}{c|}{\PSPACE-hard}&?\\
\hline
Total sum & \multicolumn{2}{c|}{\EXP-hard, \EXPSPACE}&\PSPACE-c\\
\hline
\end{tabular}
\caption{Complexity of deciding the winner in interval games}\label{tab:complexitySummary}
\end{center}
\end{table}

\begin{table}
\begin{center}
\small
\begin{tabular}{|l|c|c|}
\hline
Payoff type & Single interval & Multiple intervals\\
&(Eve/Adam)&\\
\hline\hline
Liminf/limsup & \multicolumn{2}{c|}{Positional}\\
\hline
Mean-payoff & Finite/Positional& Infinite\\
\hline
Discounted sum (non-singleton) & \multicolumn{2}{c|}{Finite} \\
\hline
Discounted sum (exact value) & \multicolumn{2}{c|}{Infinite} \\
\hline
Total sum & Finite/Infinite & Infinite\\
\hline
\end{tabular}
\caption{Memory requirements for interval games}\label{tab:memorySummary}
\end{center}
\end{table}

Tables~\ref{tab:complexitySummary} and~\ref{tab:memorySummary} summarize the results of this paper: the first table highlights the complexity results and the second table highlights the memory requirements for playing optimally.
While the classical threshold problems for weighted games can be solved in $\P\TIME$ for (lim)inf and (lim)sup and in $\NP \cap \coNP$ for mean-payoff, discounted sum and total sum, and memoryless strategies always suffice, the situation for our interval objectives is far richer:
\begin{itemize} 
\item For liminf and limsup, we provide a polynomial time algorithm in the case of a single interval. For a union of intervals, we show that these games are polynomially equivalent to parity games: so we can solve them in $\NP \cap \coNP$, and a polynomial time algorithm for interval liminf games would provide a polynomial time algorithm for parity games (a long-standing open question in the area). Optimal strategies are memoryless for both players.
\item For interval mean-payoff games, we provide a recursive algorithm that executes in polynomial space. This algorithm leads to a $\NP \cap \coNP$ algorithm in the case of single interval objectives. While mean-payoff games can be solved in polynomial time when weights are given in unary, we show here that interval mean-payoff games are at least as hard as parity games even when weights are given in unary. So, a pseudo-polynomial time algorithm for interval mean-payoff games would lead to a polynomial algorithm for parity games. For a union of intervals, infinite memory may be necessary for both players, and for single interval exponential memory may be  necessary for Eve while Adam can always play a memoryless strategy. 
\item Interval discounted sum games are complete for polynomial space when singleton intervals (and singleton gaps between intervals) are forbidden. The decidability for the case when singletons are allowed is left open and it generalizes known open problems in single player discounted sum graphs~\cite{cfw13,bh11}. Finite memory suffices for both players in the non-singleton case and infinite memory is needed for both players when singletons are allowed.
\item For the total sum payoff, we establish a strong link with one counter parity games that leads to a \PSPACE-complete result for unary encoding and an \EXPSPACE\ solution for the binary encoding together with an \EXP-hardness result. For single interval games Eve need only play finite memory strategies, while she may need infinite memory in the general case.  In both cases, Adam may require an infinite memory strategy.

\end{itemize}

\paragraph{{\bf Structure of the paper}} Section~\ref{sec:prelim} introduces the necessary preliminaries. In Sections \ref{sec:liminf}, \ref{sec:MP}, \ref{sec:DS}, and~\ref{sec:totalSum} we consider the decision problems and memory requirements for the liminf/limsup, mean-payoff, discounted sum, and total sum payoff functions, respectively.

\section{Preliminaries}\label{sec:prelim}
A game graph is a tuple $G = (V,V_\exists,E,w,q_0)$ where $(V,E,w)$ is an edge-weighted graph, $V_\exists \subseteq V$, and $q_0 \in V$ is the initial state.  Without loss of generality we will assume all weights are integers. 
In the sequel we will depict vertices in $V_\exists$ with squares and vertices in $V \setminus V_\exists$ with circles.  In complexity analyses we will denote the maximum absolute value of a weight in a game graph by $W$.  If $V' \subseteq V$, we denote by $G\setminus V'$ the game graph induced by $V\setminus V'$.

A play in a game graph is an infinite sequence of states $\pi = v_0v_1 \cdots$ where $v_0 = q_0$ and $(v_i,v_{i+1}) \in E$ for all $i$.  Given a play $\pi = v_0v_1 \cdots$ and integers $k,l$ we define $\pi[k..l] = v_k\cdots v_l$, $\pi[..k] = \pi[0..k]$, and $\pi[l..] = v_lv_{l+1}\cdots$.  We extend the weight function to partial plays by setting $w(\pi[k..l]) = \sum_{i=k}^{l-1} w((v_i,v_{i+1}))$. A strategy for Eve (Adam) is a function $\sigma$ that maps partial plays ending with a vertex $v$ in $V_\exists$ ($V \setminus V_\exists$) to a successor of $v$.  A strategy has memory $M$ if it can be realized as the output of a finite state machine with $M$ states.  A memoryless (or positional) strategy is a strategy with memory $1$, that is, a function that only depends on the last element of the given partial play.  A play $\pi = v_0v_1 \cdots$ is consistent with a strategy $\sigma$ for Eve (Adam) if whenever $v_i \in V_\exists$ ($v_i \in V\setminus V_\exists$), $\sigma(\pi[..i]) = v_{i+1}$.

\subsection{Payoff functions}
A play in a game graph defines an infinite sequence of weights.  We define below several common functions that map such sequences to real numbers.

\paragraph{Liminf/limsup.}
The liminf (limsup) payoff is determined by the minimum (maximum) weight seen infinitely often.  Given a play $\pi = v_0v_1\cdots$ we define:
\[ \liminf(\pi) = \liminf_{i \to \infty} w(v_i,v_{i+1}) \qquad \limsup(\pi) = \limsup_{i \to \infty} w(v_i,v_{i+1}).\]
Note that by negating all weights and the endpoints of the intervals we transform a limsup game to a liminf game and vice-versa.  

\paragraph{Mean-payoff.}
The \emph{mean-payoff} value of a play is the limiting average weight, however there are several suitable definitions because the running averages might not converge.  The mean-payoff values of a play $\pi$ we are interested in are defined as:
\[ \underline{MP}(\pi) = \liminf_{k \to \infty} \frac{1}{k} w(\pi[..k]) \qquad \overline{MP}(\pi) = \limsup_{k \to \infty} \frac{1}{k} w(\pi[..k]).\]
As with liminf/limsup games we can switch between definitions by negating weights and interval endpoints, so we will only consider the $\underline{MP}$ function.

\paragraph{Discounted sum.}
The \emph{discounted sum} is defined by a discount factor $\lambda \in (0,1)$.  Given a play $\pi=v_0v_1\cdots$, we define:
\[ DS_\lambda(\pi) = \sum_{i=0}^\infty \lambda^i \cdot w(v_i,v_{i+1}).\]

\paragraph{Total sum.}
The \emph{total sum} condition can be thought of as a refinement of the mean-payoff condition, enabling discrimination between plays that have a mean-payoff of $0$.  Given a play $\pi$ we define:
\[ \underline{Total}(\pi) = \liminf_{k \to \infty} w(\pi[..k]) \qquad \overline{Total}(\pi) = \limsup_{k \to \infty} w(\pi[..k]).\]
As with liminf/limsup games we can switch between definitions by negating weights and interval endpoints, so we will only consider the $\underline{Total}$ function.

\subsection{Interval games}
For a fixed payoff function $F$, an  \emph{interval $F$ game} consists of a finite game graph and a finite union of real intervals $I = I_1 \cup \cdots \cup I_r$.  Given an interval $F$ game $(G,I)$, a play $\pi$ in $G$ is winning for Eve if $F(\pi) \in I$ and winning for Adam if $F(\pi) \notin I$.  We say a player wins the interval game if he or she has a strategy $\sigma$ such that all plays consistent with $\sigma$ are winning for that player.
For convenience we will assume the intervals are non-overlapping and ordered such that $\sup I_i \leq \inf I_{i+1}$ for all $i$.

\subsection{Parity games}
A parity game is a pair $(G,\Omega)$ where $G$ is a game graph (with no weight function) and $\Omega:V \to \mathbb{N}$ is a function that assigns a priority to each vertex.  Plays and strategies are defined as with interval games.  A play defines an infinite sequence of priorities, and we say it is winning for Eve if and only if the minimal priority seen infinitely often is even.

\section{Liminf games}\label{sec:liminf}
The first payoff function we consider is the $\liminf$ function.  Note that as this always takes integer values, we can assume all intervals are closed or open as necessary.
We show below that deciding interval liminf games is polynomially equivalent to deciding parity games.  In particular the number of intervals is equal to the number of even priorities required, so single interval liminf games are equivalent to parity games with at most three priorities and can therefore be solved in polynomial time~\cite{Sch07}.   Further, the range of the priorities are determined by range of the weight function and vice versa, so this equivalence also holds for unary encoded interval liminf games.
\begin{theorem}
The following problems are polynomially equivalent:
\begin{enumerate}[(i) ]
\item Deciding if Eve wins a unary encoded interval liminf  game;
\item Deciding if Eve wins a binary encoded interval liminf  game; and
\item Deciding if Eve wins a parity game.
\end{enumerate}
\end{theorem}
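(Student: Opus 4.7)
The plan is to close the cycle $(i) \Rightarrow (ii) \Rightarrow (iii) \Rightarrow (i)$. The first implication is immediate, since every unary encoded instance is already a binary encoded one (with only a polynomial increase in length), so any algorithm solving $(ii)$ solves $(i)$ as well.

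For $(ii) \Rightarrow (iii)$, I would turn an interval liminf game $(G,I)$ with $I = I_1 \cup \cdots \cup I_r$ sorted left-to-right into a parity game of polynomial size. The key observation is that the liminf of a play is the smallest weight that appears infinitely often, so it suffices to assign priorities to the weights in a way that both respects their order and marks whether they lie inside $I$. Let $G_0,G_1,\ldots,G_r$ denote the ``gaps'' between the intervals (with $G_0 = (-\infty,\inf I_1)$ and $G_r = (\sup I_r,+\infty)$). I would subdivide every edge $(u,v)$ of $G$ by a fresh single-successor vertex $e_{uv}$, and give $e_{uv}$ priority $2j$ if $w(u,v) \in I_j$ and priority $2j+1$ if $w(u,v) \in G_j$; original vertices receive some large priority (say $2r+2$) so that they never contribute the minimum. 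The minimum priority seen infinitely often is then even if and only if the liminf lies in $I$, and both the number of vertices and the range of priorities are polynomial in the input, even when weights are given in binary.

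For $(iii) \Rightarrow (i)$, I would go in the opposite direction with the same idea. Given a parity game with priority function $\Omega \colon V \to \{0,1,\ldots,d\}$, I would assign every edge $(u,v)$ the weight $\Omega(u)$ and take $I = \{0\} \cup \{2\} \cup \cdots \cup \{2\lfloor d/2 \rfloor\}$. Since $u$ is left infinitely often exactly when it is visited infinitely often, the liminf of the weights along a play equals the smallest priority visited infinitely often, and lies in $I$ precisely when that minimum is even. Priorities in a parity game are bounded by $|V|$, so writing the weights in unary still yields a polynomial-size instance, making this a reduction into unary encoded liminf interval games.

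None of these reductions presents a real technical obstacle. The only subtle point is the order-preserving priority assignment in $(ii) \Rightarrow (iii)$: the priorities must strictly increase with the weight value so that ``smallest weight seen infinitely often is in $I$'' translates faithfully to ``smallest priority seen infinitely often is even'', and the alternation $I_j \mapsto 2j$, $G_j \mapsto 2j+1$ delivers both properties simultaneously.
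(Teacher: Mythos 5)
Your proposal is correct and follows essentially the same route as the paper: the same edge-subdivision with an order-preserving priority assignment (your $I_j\mapsto 2j$, $G_j\mapsto 2j+1$ is exactly the paper's $\Omega_I$, with only the inessential choice of $2r+2$ versus $2r+1$ for the original vertices), and the same reverse reduction via tail priorities as weights with singleton intervals at the even priorities.
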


\begin{proof}
(i)$\Rightarrow$(ii): Trivial.

(ii)$\Rightarrow$(iii): 
For this reduction, we use the following function which will also be used in Section~\ref{sec:totalSum}.  Let $I = I_1 \cup I_2 \cup \cdots \cup I_r$ be a finite union of closed integer intervals such that $\sup I_i < \inf I_{i+1}$ for all $i$.  Define $\Omega_I:\mathbb{Z} \to [1,2r+1]$ as follows:
\[ \Omega_I(n) = \left\{\begin{array}{ll} 2i&\text{ if $n \in I_i$},\\
1&\text{ if $n < \inf I_1$, and}\\
\max\{1+2i\st \sup I_i < n\}&\text{ otherwise.}\end{array}\right.
\]

Now suppose $(G,I)$ is an interval liminf game.  
We transform the game graph $G$ to $G'$ as follows.  Every edge $e$ is sub-divided and the subdividing vertex is given priority $\Omega_I(w(e))$.
The original vertices of $G$ are all given priority $2r+1$.

It is not difficult to see that there is a $1$-$1$ correspondence between plays in $G$ and plays in $G'$, and that for any play in $G$, $\liminf w(e) \in I_i$ for some $i$ if and only if the minimum priority in the corresponding play in $G'$ seen infinitely often is even.

(iii)$\Rightarrow $(i): 
To go the other direction, given a parity game played on $G$ we transform it to an interval liminf game played on $G'$ as follows.  $G'$ is the weighted graph obtained by setting the weight of an edge to be the priority at the vertex at the tail of the edge (that is, the vertex for which the edge is outgoing).  The intervals are singleton intervals containing each of the even priorities that occur in $G$.  Clearly any play in $G$ is a play in $G'$ and it is not difficult to see that for a play in $G$ the minimum priority seen infinitely often is even if and only if the $\liminf$ of the weights of all edges in a play of $G'$ lie in a given interval.
\end{proof}

We observe that the above reductions between parity and liminf games do not significantly alter the topology of the game graph (if at all).  In particular, positional strategies in one game readily translate to positional strategies in the other.  It follows from the positional determinacy of parity games~\cite{Zie98}, that:
\begin{corollary}
Positional strategies suffice for interval liminf games.
\end{corollary}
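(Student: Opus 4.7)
The plan is to invoke Zielonka's positional determinacy for parity games~\cite{Zie98} together with the reduction (ii)$\Rightarrow$(iii) already exhibited. Given an interval liminf game $(G,I)$, we build the parity game $G'$ exactly as in that reduction: each edge $e$ is subdivided by a fresh vertex $m_e$ carrying priority $\Omega_I(w(e))$, and every original vertex is assigned priority $2r+1$. The ownership of each subdivision vertex $m_e$ can be fixed arbitrarily (say, assigned to Adam) without affecting the game, since $m_e$ has out-degree one.

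By Zielonka's theorem, exactly one of the two players has a positional winning strategy $\sigma'$ in $G'$. The main point is then to translate such a positional strategy back across the reduction. Because every $m_e$ has a unique successor, a positional strategy on the original vertices determines the entire play; concretely, from a positional strategy $\sigma'$ in $G'$ I would define a positional strategy $\sigma$ in $G$ by $\sigma(v) = v'$ whenever $\sigma'(v) = m_{(v,v')}$, and conversely lift any positional $\sigma$ in $G$ to $G'$ in the obvious way. The bijection between plays of $G$ and plays of $G'$ noted in the reduction's proof respects these correspondences, and preserves the winning condition.

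Putting these pieces together: whichever player wins $(G,I)$ corresponds, under the reduction, to the winner in $G'$; by positional determinacy of parity games that player has a positional winning strategy $\sigma'$ in $G'$, and its restriction $\sigma$ is a positional winning strategy in $G$. The only step requiring care is observing that out-degree-one vertices can be assigned to either player without changing who wins, which justifies treating the reduction as topology-preserving for the purposes of strategy translation. Everything else is a direct transfer through the reduction, so no genuine obstacle arises.
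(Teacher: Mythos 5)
Your proposal is correct and follows essentially the same route as the paper: the paper likewise observes that the reduction to parity games preserves the topology of the game graph, so that positional strategies transfer in both directions, and then invokes positional determinacy of parity games~\cite{Zie98}. You merely spell out the strategy translation (via the out-degree-one subdivision vertices) that the paper leaves implicit.
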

\section{Mean-payoff games}\label{sec:MP}



In this section we investigate interval mean-payoff games.  We give a recursive algorithm that repeatedly asks for a solution for the \emph{mean-payoff threshold problem}: Given a game graph $G$ and a threshold $\nu \in \mathbb{Q}$ does Eve have a strategy to ensure the (liminf) mean-payoff of all consistent plays is at least\footnote{or at most if she is minimizing the payoff} $\nu$?  As mentioned earlier this problem is known to be in $\NP \cap \coNP$, and solvable in time $O(|V|\cdot|E|\cdot W)$ and space $O(|V| \cdot \log(|E|\cdot W))$~\cite{bcdgr11}. We denote this problem by $\xMP_{\sim \nu}(G)$ where $\sim \in \{\geq, >, \leq, <\}$ depending on whether Eve is maximizing or minimizing the payoff and whether or not a payoff of $\nu$ is winning for Eve.  It is well known~\cite{em79} that the strict threshold problem can be reduced to a non-strict threshold problem -- this follows from the fact that mean-payoff values are restricted to a finite set of rationals.

Our algorithm implies that for a fixed number of intervals the problem reduces to the classic threshold problem (under polynomial-time Turing reductions).  In Section~\ref{sec:singleMP} we consider single interval mean-payoff games in more detail.  In particular we show that in this case finite memory strategies (indeed, positional strategies for Adam) suffice for winning strategies.  However, our first observation of this section is that in general interval mean-payoff games may require infinite memory.
\begin{lemma}
Finite memory winning strategies are not sufficient in interval mean-payoff games.
\end{lemma}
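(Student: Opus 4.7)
The plan is to exhibit an explicit interval mean-payoff game $(G,I)$ with a multi-interval target such that Eve has a winning strategy but no such strategy is realisable with finite memory.

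I would construct $G$ as a small game graph whose structure forces Eve to adapt to Adam's long-run routing: for instance, a three-vertex graph in which Adam at an initial vertex $v_0$ routes to one of two Eve-controlled vertices $v_1,v_2$, each equipped with a self-loop and a return edge to $v_0$. The self-loop and return weights, together with a target $I=I_1\cup I_2$ consisting of two disjoint intervals (e.g., two singletons such as $\{0\}$ and $\{1\}$), will be tuned so that no positional ``trap'' of Eve yields a mean-payoff in $I$ against every Adam response, and so that the per-cycle mean-payoff values achievable by bounded-shape lassos in the product with a bounded-memory strategy all fall in the gap between $I_1$ and $I_2$.

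For the winning infinite-memory strategy I would describe $\sigma^\star$ in which Eve, on her $k$-th visit to one of the Eve-controlled vertices, performs $f(k)$ self-loops before returning, where $f$ is an unbounded function (e.g., $f(k)=k$). A case analysis on Adam's long-run behaviour then yields $\underline{MP}(\pi)\in I$ for every consistent play $\pi$: if Adam visits the high-weight branch only finitely often, the cumulative weight is bounded while the length grows unboundedly, so the running mean is pulled to the singleton in $I_1$; if he visits it infinitely often, the ever-longer bursts of nonzero-weight self-loops dominate and drive the running mean to the singleton in $I_2$.

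The main obstacle is showing that no finite-memory strategy of Eve wins. Fix $\sigma$ realised by a finite-state machine with memory of size $M$ and form the product game $G\times[M]$, on which Adam is the sole player. Since one-player mean-payoff games admit positional optimal strategies, Adam has a positional response whose consistent play is eventually a lasso, and the lasso's mean-payoff is a specific rational $p/q$ determined by its cumulative weight and length. The crucial observation is that any per-visit self-loop count that $\sigma$ prescribes inside such a lasso is uniformly bounded by $M$; this bound prevents the per-cycle weight-to-length ratio from approaching the endpoints of the gap in $I$. A direct case analysis then picks a positional Adam response whose lasso alternates between the two Eve-vertices in a ratio that forces the mean-payoff strictly into the gap between $I_1$ and $I_2$, hence outside $I$. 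The principal difficulty lies in making this selection uniform over all finite-memory $\sigma$; the natural technique is to combine a pigeonhole argument on the finitely many lassos of the product graph with the bound $M$ on Eve's self-loop counts.
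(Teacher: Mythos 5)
Your overall plan---a two-interval target, an Eve strategy built from unboundedly long looping bursts, and a pigeonhole argument showing that a memory-$M$ strategy either exits a self-loop within $M$ steps or stays in it forever---has the right shape, and that last ingredient is exactly the one the paper uses (its example is $I=(0,1]\cup[2,\infty)$ on a two-vertex graph, where Adam's strategy of looping $M+1$ times forces every segment's average into the gap $(1,2)$). But there are two genuine gaps. First, the construction is never pinned down: the weights are left ``to be tuned'', and the one concrete instantiation you offer ($I=\{0\}\cup\{1\}$ with Eve owning both self-loops) collapses---if either self-loop's weight lies in $I$, Eve wins positionally by looping there forever, and if both lie outside $I$ you must still rule out the positional strategy that never loops at all; neither is checked. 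Relatedly, your finite-memory analysis via lassos in the product graph must also dispose of the case where Eve's bounded-memory strategy loops forever at a self-loop, which is only losing if that loop's weight is outside $I$---again a constraint on the unspecified weights (the paper handles this explicitly: staying in $q_0$ forever gives $\underline{MP}=0\notin I$).

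Second, and more seriously, the correctness of your infinite-memory strategy does not follow from the stated case split. With a non-reactive rule such as ``loop $f(k)=k$ times on the $k$-th visit'', if Adam routes to \emph{both} Eve-vertices infinitely often the running average oscillates and its $\liminf$ typically lands strictly between your two singletons: with self-loop weights $0$ and $1$ and alternating routing, the running mean tends to $1/2$, which is in the gap. Your dichotomy ``Adam visits the high-weight branch finitely often / infinitely often'' ignores that in the second case he may visit the low-weight branch infinitely often as well. The paper avoids this by making Eve's strategy \emph{reactive}: Adam owns the weight-$2$ loop, Eve counts how many times he takes it and replays her weight-$0$ loop exactly that many times, so every segment between successive returns has average exactly $1$; the two cases then give $\underline{MP}=2$ and $\underline{MP}=1$, both in $I$. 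To repair your argument you would need either such a mirroring strategy, or loop lengths that depend on the accumulated weight and length of the play so far (not merely the visit count), together with a proof that the resulting $\liminf$ lands exactly on a target value.
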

\begin{proof}
Consider the game in Figure~\ref{fig:infMem} where $I = (0,1] \cup [2,\infty)$.  Eve has an infinite memory winning strategy in this game as follows.  First she plays to $q_1$.  Then she counts how many times Adam takes the loop $(q_1,q_1)$.  If Adam returns to $q_0$ then Eve takes the loop $(q_0,q_0)$ the same number of times before returning to $q_1$.  Clearly any play consistent with this strategy that only visits $q_0$ finitely often will satisfy $\underline{MP} = 2$, and any play that visits $q_0$ infinitely often will satisfy $\underline{MP}=1$.  Therefore the strategy is winning for Eve.  Now suppose Eve plays a finite memory strategy $\sigma$ with memory $M$.  We observe that any play consistent with $\sigma$ that visits $q_0$ either remains in $q_0$ or exits $q_0$ in at most $M$ steps -- if a play stays in $q_0$ for more than $M$ steps then a memory state must have been revisited, thus the strategy will keep the play in $q_0$ indefinitely.  Consider the following (finite memory) strategy of Adam: whenever the play reaches $q_1$, take the loop $(q_1,q_1)$ $M+1$ times then move to $q_0$.  We claim this strategy is winning for Adam.  If at some point the play consistent with $\sigma$ and this strategy remains in $q_0$ indefinitely then it has $\underline{MP}=0$, so it is winning for Adam.  Otherwise the play exits $q_0$ infinitely often, that is the edge $(q_0,q_1)$ is taken infinitely often.  Let us break up the play into the segments defined by successive occurrences of this edge.  Following the above argument the length of each of these segments is between $M+3$ and $2M+3$, and the weight of each of these segments is exactly $2M+4$.  Thus the average weight for each segment lies between $1+\frac{1}{2M+3}$ and $2-\frac{2}{M+3}$ inclusive.  As $M$ is fixed, it follows that $\underline{MP} \in (1,2)$ and thus the play is winning for Adam.
\end{proof}

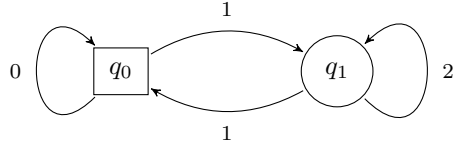
\begin{figure}
\begin{center}
\begin{tikzpicture}[inner sep=2mm, ve/.style={rectangle, draw}, va/.style={circle, draw}]
\node[ve] (A){$q_0$};
\node[va, right=of A] (B){$q_1$};
\path
(A) edge[bend left] node[el]{$1$} (B)
(B) edge[loop, looseness=6,in=45,out=-45] node[swap,el]{$2$} (B)
(B) edge[bend left] node[el]{$1$} (A)
(A) edge[loop, looseness=6,in=135,out=225] node[el]{$0$} (A);

\end{tikzpicture}
\end{center}
\caption{Interval mean-payoff game ($I = (0,1] \cup [2,\infty)$) which requires infinite memory}\label{fig:infMem}
\end{figure}


\subsection{Upper bounds}

We now present an algorithm, Algorithm~\ref{alg:MPfull}, for computing the winning regions in an interval mean-payoff game.

\begin{algorithm}
\caption{$\xMP_{I}(G)$}\label{alg:MPfull}
\begin{algorithmic}
\REQUIRE A game graph $G = (V,V_\exists,E,w,q_0)$ and a finite union of real intervals $I$.
\ENSURE $(W^\exists,W^\forall)$ where $W^\exists$ ($W^\forall$) are the vertices from which Eve (Adam) has a winning strategy.
\IF{$I = \emptyset$}
\RETURN $(\emptyset, V)$
\ENDIF
\STATE $a \leftarrow \inf I$
\IF{$a = -\infty$}
\STATE $(W,W') \leftarrow \xMP_{\mathbb{R} \setminus I}(\overline{G})$  \COMMENT{$\overline{G}$ is $G$ with $V_\exists$ and $V\setminus V_\exists$ swapped}
\ELSE
\STATE $W \leftarrow \emptyset$
\REPEAT
\STATE $(A,A') \leftarrow \xMP_{\succ a}(G)$ \COMMENT{If $a \in I$ then $\succ = \geq$ otherwise $\succ = >$}
\STATE $(B,B') \leftarrow \xMP_{(-\infty, a] \cup I}(G)$
\STATE $W \leftarrow W \cup A' \cup B'$
\STATE $G \leftarrow G \setminus (A' \cup B')$
\UNTIL{$A' \cup B' = \emptyset$}
\ENDIF
\RETURN $(V \setminus W, W)$

\end{algorithmic}
\end{algorithm}

The correctness of the algorithm is given by the following lemma.
\begin{lemma}\label{lem:MPCorrect}
Let $(G,I)$ be an interval mean-payoff game.  $\xMP_I(G)$ correctly computes the winning regions for Adam and Eve.
\end{lemma}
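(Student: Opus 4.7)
I would establish the lemma in two directions: soundness ($W$ is contained in Adam's winning region of the original game $G_0$) and completeness ($V\setminus W$ is contained in Eve's winning region), together with termination. The outer repeat-loop terminates because each non-exit iteration strictly shrinks $V$, and the recursive depth is bounded: each pair of nested calls through the $(-\infty,a]\cup I$-branch followed by the $a=-\infty$ swap strictly reduces the number of intervals of the argument, giving recursion depth $O(r)$ for $r$ the number of intervals of the input $I$.

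For soundness I would induct on the iteration count. Assume $W_{k-1}$ lies in Adam's winning region of $G_0$. Any vertex $v$ added in iteration $k$ belongs to $A'_k\cup B'_k$, computed in the current subgame $G_k=G_0\setminus W_{k-1}$. If $v\in A'_k$, take Adam's winning strategy $\tau_A$ in $G_k$ for $\xMP_{\succ a}$ (i.e.\ forcing $\underline{MP}\prec a$) and compose it with the inductively supplied winning strategy on $W_{k-1}$: play $\tau_A$ while the play remains in $V(G_k)$, and switch to the $W_{k-1}$-strategy if the play ever crosses into $W_{k-1}$. A play staying in $G_k$ achieves $\underline{MP}\prec a$, hence outside $I$ (since $a=\inf I$); a play leaving $G_k$ is handled inductively. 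The case $v\in B'_k$ is symmetric, using Adam's $G_k$-strategy forcing $\underline{MP}\notin(-\infty,a]\cup I$, which a fortiori excludes $I$.

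Completeness is the more delicate direction. At termination, $A'=B'=\emptyset$ in the final subgame $G'$, so from every vertex Eve wins both $\xMP_{\succ a}(G')$ and $\xMP_{(-\infty,a]\cup I}(G')$. Let $\sigma_1$ be her uniform positional winning strategy for the former (by positional determinacy of threshold mean-payoff games), and $\sigma_2$ her winning strategy for the latter, supplied inductively by the recursive call. Since $x\in I$ iff ($x\succ a$) and ($x\in(-\infty,a]\cup I$), it suffices to construct a single Eve strategy enforcing both conditions simultaneously. I would build this by interleaving segments of $\sigma_2$ (respecting the $(-\infty,a]\cup I$ objective) with corrective segments of $\sigma_1$ (pushing the running average past $a$), with segment lengths growing unboundedly so that both $\liminf$ bounds survive in the tail.

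The principal obstacle is precisely this combination step. Independent winning strategies for two mean-payoff-style objectives do not compose trivially, and the preceding infinite-memory lemma (Figure~\ref{fig:infMem}) already shows that the combined strategy may genuinely require infinite memory. Positionality of $\sigma_1$ is the key quantitative handle: every cycle of $G'$ reachable under $\sigma_1$ has mean weight at least $a$, which bounds how long a corrective $\sigma_1$-segment must be to move the running average past any prescribed threshold. The remainder of the argument will be careful bookkeeping, calibrating the interleaving schedule so that both $\liminf$ conditions hold against every Adam strategy.
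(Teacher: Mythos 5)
Your termination argument and your soundness direction essentially coincide with the paper's proof: Adam's region is handled by induction on the iterations of the loop, composing each stage's threshold strategy with the inductively obtained strategy on earlier stages, and prefix-independence of $\underline{MP}$ makes the composition sound. The gap is in the completeness direction, at exactly the step you defer to ``careful bookkeeping.'' Your proposed mechanism --- interleaving $\sigma_2$-segments with corrective $\sigma_1$-segments according to a predetermined schedule of unboundedly growing lengths --- does not work as stated. First, $\sigma_2$ only guarantees $\underline{MP}\in(-\infty,a]\cup I$, an objective that tolerates arbitrarily negative values, so during an ever-longer $\sigma_2$-segment Adam may drag the running average arbitrarily far below $a$; the length of the corrective $\sigma_1$-segment needed to push the average back above $a$ then depends on how far and for how long it was dragged down, i.e.\ on Adam's play, so no schedule fixed in advance can be calibrated to succeed. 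Second, even if every correction succeeded, you would still have to show that the \emph{combined} play lands in $I$ rather than in the gap $(-\infty,a]$ of $I'$: neither component strategy's guarantee transfers to the interleaved play, since mean-payoff objectives are not closed under such compositions, so ``both liminf bounds survive in the tail'' is the conclusion you need, not an argument.

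The paper resolves this with a \emph{value-triggered} rather than time-triggered switch. Fix a target $t$ in the lowest interval $I_1$ of $I$ (so $a=\inf I_1$ and $[a,t]\subseteq I_1\subseteq I$); Eve plays $\sigma_>$ whenever the current running average is below $t$ and the recursive strategy $\sigma_<$ (your $\sigma_2$) whenever it is at least $t$. If she switches finitely often, the tail strategy's own guarantee applies directly. If she switches infinitely often, the average dips below $t$ infinitely often, so $\underline{MP}\le t$; and positionality of $\sigma_>$ gives precisely the quantitative handle you identified --- normalizing $a=0$, while $\sigma_>$ is in force the running total can fall at most $(|V|+1)W$ below its value at the moment of the switch, which was nonnegative --- so the average at time $n$ is at least $a-(|V|+1)W/n$ and hence $\underline{MP}\ge a$. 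The combined play is thus shown to have value in the specific subinterval $[a,t]$, rather than to satisfy the two component objectives separately; this adaptive trigger and the resulting direct computation on the combined play are what your sketch is missing.
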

\begin{proof}
We observe that by symmetry the winning regions of $\xMP_I(G)$ are precisely the complements of the winning regions of $\xMP_{\mathbb{R}\setminus I}(G)$.  Thus the algorithm correctly computes the winning regions for $I$ if and only if correctly computes the winning regions for $\mathbb{R}\setminus I$.  In particular we can assume that either $I=\emptyset$ or $\inf I > -\infty$.  

The proof is by induction on the number of interval boundaries in $I$.  If there are no boundaries then $I=\emptyset$ and so $\xMP_I(G)$ returns the correct value: $(\emptyset, V)$.  Now suppose $a = \inf I > -\infty$.  Note that $I' = (-\infty,a] \cup I$ has one interval boundary fewer than $I$, so by the induction hypothesis the recursive call in line 11 correctly computes the winning regions of $G$ for the interval $I'$.  Let $W_i$ ($i=0,1,\ldots$) denote the set of vertices in $W$ after $i$ iterations.  Note that the algorithm runs until $W_n = W_{n+1}$, and the subgraph of $G$ used in the $i$-th iteration is $G \setminus W_{i-1}$.  We prove by induction on $i$ that Adam has a winning strategy from every vertex in $W_i$.  For $i=0$, $W_0 = \emptyset$ so the result holds trivially.  Now suppose Adam has a winning strategy from every vertex in $W_i$, and let $v \in W_{i+1} \setminus W_i$.  Either $v$ is in the winning region of Adam for $\xMP_{\succ a}(G\setminus W_i)$ or $v$ is in the winning region of Adam for $\xMP_{I'}(G\setminus W_i)$.  In both cases the corresponding winning strategy will ensure a payoff outside $I$ and will therefore be winning for plays restricted to $G \setminus W_i$.  Thus his strategy from $v$ is to play this strategy until a vertex in $W_i$ is reached, whereupon he switches to the winning strategy from that vertex.

We now show that Eve has a winning strategy on the vertices in $V \setminus W$.  Note that on these vertices Eve has two strategies: a memoryless strategy $\sigma_>$ which ensures $\underline{MP} \succ a$; and, by the inductive hypothesis, a strategy $\sigma_<$ which ensures a payoff in the interval $I'$.  Also note that plays consistent with these strategies remain in $V \setminus W$.  We now show how to combine these two strategies to obtain a winning strategy for the interval $I$.  For simplicity we will assume $a\in I$, if it is not the case, then the same arguments apply by replacing $a$ with the smallest payoff Adam can attain against $\sigma_>$.
Let $I_1$ be the interval of $I$ with $a = \inf I_1$, and let $t$ be any element of $I_1$.  The strategy for Eve is to track the current average weight of the play so far.  If it is less than $t$ then she plays $\sigma_>$ and if it is greater than or equal to $t$ then she plays $\sigma_<$.  Clearly if she changes strategy only finitely often then her strategy is winning: if she eventually only plays $\sigma_>$ then the payoff will be in $[a,t) \subseteq I_1 \subseteq I$; and if she eventually only plays $\sigma_<$ then the payoff will be in $[t,\infty) \cap I' \subseteq I$.  Now suppose the play causes Eve to switch strategy infinitely often.  The problem here is that when switching to $\sigma_>$ the average weight may go below $a$, and if this happens infinitely often the $\liminf$ average may be below $a$.  However, as $\sigma_>$ is memoryless, the average after $n$ steps will never be more than $\frac{(|V|+1)W}{n}$ below $a$: this is seen easiest by taking $a=0$ and considering the total, rather than the average, weight.  This tends to $0$ as $n$ tends to $\infty$ hence $\underline{MP}$ is at least $a$.  As the average goes below $t$ infinitely often, $\underline{MP} \leq t$.  Therefore the payoff of the play is in $[a,t] \subseteq I_1 \subseteq I$, and hence the combined strategy is winning for Eve.
\end{proof}

The running time for Algorithm~\ref{alg:MPfull} is $|V|^{2r-1} \cdot {\bf MP}$, where ${\bf MP}$ is the running time for an algorithm to solve the mean-payoff threshold problem.  It is straightforward to see that the algorithm can be implemented in polynomial space. 

\begin{theorem}
Let $G$ be a game graph and $I$ a finite union of $r$ real intervals.
Whether Eve wins the interval mean-payoff game $(G,I)$ can be decided in time $O(|V|^{2r}\cdot |E| \cdot W)$ and space $O(r \cdot |V|\cdot \log (|E|\cdot W))$.
\end{theorem}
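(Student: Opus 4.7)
Correctness of the algorithm is already established by Lemma~\ref{lem:MPCorrect}, so the task reduces to verifying the stated time and space bounds for Algorithm~\ref{alg:MPfull}. My plan is to perform a two-parameter analysis, structural induction on the number of finite boundaries of the interval union $I$ on the outside, and a simple amortisation argument for the repeat loop on the inside.

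\textbf{Time bound.} I would take as the inductive measure the number of finite boundary points of the interval union $I$ (at most $2r$ for $r$ intervals). Three observations drive the recursion:
(i)~at line~11 the recursive call replaces $I$ by $(-\infty,a]\cup I$, which absorbs the leftmost finite left endpoint of $I$ into an unbounded tail and therefore strictly decreases the number of finite boundaries;
(ii)~the branch at line~6 only flips the arena and complements the intervals, without increasing the number of finite boundaries;
(iii)~the repeat loop iterates at most $|V|$ times, since every iteration with $A'\cup B'\neq\emptyset$ strictly shrinks the game graph by removing at least one vertex. Each iteration costs one call to the mean-payoff threshold subroutine, which runs in $O(|V|\cdot|E|\cdot W)$ by~\cite{bcdgr11}, plus one recursive call on a strictly smaller measure. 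Unrolling the recurrence through the $O(r)$ recursion levels produces $r$ independent $|V|$-loop factors and one accumulated threshold factor, giving the claimed $O(|V|^{2r}\cdot|E|\cdot W)$ bound.

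\textbf{Space bound.} I would exhibit a stack-based implementation in which each active frame records only the currently induced subgraph (as a subset of $V$), the current interval union, and the accumulating set $W$; each such frame fits in $O(|V|\cdot\log(|E|\cdot W))$ bits. Because the mean-payoff threshold solver of~\cite{bcdgr11} can itself be executed in $O(|V|\cdot\log(|E|\cdot W))$ space and is invoked serially (so its workspace is reusable between iterations), the total space is dominated by the frames on the stack. Since the recursion depth is $O(r)$, summing yields $O(r\cdot|V|\cdot\log(|E|\cdot W))$.

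\textbf{Main obstacle.} The delicate point is isolating the right inductive measure and verifying that it decreases under the interleaving of the two recursive branches. A naive count of intervals does not strictly decrease under line~11 alone, while a count of boundaries decreases under line~11 but is preserved under line~6; one must therefore argue that flipping does not introduce extra recursion depth and that the pairing of a line~11 call with the subsequent line~6 call in the subcall still yields a net strict decrease aligned with the $2r$ bound. Once this bookkeeping is in place the time recurrence $T(k)\le |V|\cdot T(k-1)+|V|\cdot O(|V|\cdot|E|\cdot W)$ solves cleanly to the stated bound, and the space analysis is immediate.
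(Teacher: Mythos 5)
Your overall structure — correctness delegated to Lemma~\ref{lem:MPCorrect}, induction on the number of finite boundary points, the observation that complementation at line~6 preserves the measure while line~11 strictly decreases it, the $|V|$ bound on loop iterations, and the stack-frame space accounting — matches what the paper intends (the paper itself only asserts the running time is $|V|^{2r-1}\cdot\mathbf{MP}$ and that polynomial space is ``straightforward''). The space bound and the termination bookkeeping in your ``main obstacle'' paragraph are fine.

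The gap is quantitative but real: your recurrence $T(k)\le |V|\cdot T(k-1)+|V|\cdot O(|V|\cdot|E|\cdot W)$ does \emph{not} solve to the stated bound. Unrolling it over $k=2r$ levels gives $O(|V|^{2r+1}\cdot|E|\cdot W)$, i.e.\ $O(|V|^{2r})$ threshold calls, one factor of $|V|$ more than the theorem claims (and more than the paper's count of $|V|^{2r-1}$ calls; compare the single-interval case, where Algorithm~\ref{alg:singleMP} makes only $O(|V|)$ calls while your recurrence with $k=2$ would give $O(|V|^2)$). The missing observation is at the base of the recursion: when only one finite boundary remains, $I$ is a half-line (after at most one complementation), so the recursive call is a pure threshold problem. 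There the second subcall is on $(-\infty,a]\cup I=\mathbb{R}$, so $B'=\emptyset$ in every iteration, and after removing $A'$ once the remaining graph is Eve's threshold winning region, which is a trap for Adam; hence the second iteration already yields $A'=\emptyset$ and the loop terminates after $O(1)$ iterations rather than $|V|$. With $T(1)=O(\mathbf{MP})$ as the base case the recurrence gives $T(k)=O(|V|^{k-1}\cdot\mathbf{MP})$, so $T(2r)=O(|V|^{2r-1}\cdot|V|\cdot|E|\cdot W)=O(|V|^{2r}\cdot|E|\cdot W)$ as required. Without this refinement you should either weaken the claimed exponent or supply an equivalent saving.
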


We observe that although the players may require infinite memory for a winning strategy, Algorithm~\ref{alg:MPfull} shows that a winning strategy can be succinctly represented by $2r$ positional sub-strategies.  It is not clear
that given such a certificate whether there exists an efficient algorithm for computing the winning region, however we believe that this is the case.  By the symmetry of the roles of the players, such an algorithm would show that
the interval mean-payoff game is both in $\NP$ and $\coNP$.
\begin{conjecture}
Determining whether Eve wins an interval mean-payoff game is in $\NP \cap \coNP$.
\end{conjecture}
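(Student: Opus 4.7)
The plan is to show that the decomposition implicit in Algorithm~\ref{alg:MPfull} yields a polynomial-size certificate that can be verified in polynomial time. For membership in $\NP$, Eve's certificate would consist of: (a)~the winning region $V \setminus W$ computed at the top level; (b)~a positional strategy $\sigma_>$ on $V \setminus W$ witnessing $\xMP_{\succ a}(G \setminus W)$, where $a = \inf I$; and (c)~recursively, a certificate for Eve's winning strategy in the interval game $(G \setminus W, I')$ with $I' = (-\infty, a] \cup I$. Since $I'$ has one fewer interval boundary than $I$, unrolling the recursion (depth at most the number of interval boundaries of $I$) gives a certificate of total size $O(r \cdot |V|)$, consisting of $O(r)$ positional sub-strategies together with the associated winning regions.

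To verify the certificate, I would proceed bottom-up along the recursion. At each level, fixing Eve's choices according to $\sigma_>$ reduces the mean-payoff threshold check to a one-player mean-payoff problem: does every cycle reachable from the claimed winning region in the resulting graph have mean weight $\succ a$? This is decidable in polynomial time, e.g.\ by an SCC decomposition together with Karp's minimum-mean-cycle algorithm. One also checks that $\sigma_>$ keeps plays inside the claimed region and that every successor of an Adam vertex in the region remains in the region, so plays consistent with Eve's strategy cannot escape to $W$. The recursively verified sub-certificate supplies $\sigma_<$ on the same vertex set for the interval $I'$. By the argument in the proof of Lemma~\ref{lem:MPCorrect}, the strategy obtained by switching between $\sigma_>$ and $\sigma_<$ according to the running average is then winning for Eve, so the certificate is correct.

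For $\coNP$ membership, I would appeal to the symmetry noted at the start of the proof of Lemma~\ref{lem:MPCorrect}: Adam wins $(G, I)$ from a vertex $v$ if and only if Eve, as maximizer on $\overline{G}$, wins the interval mean-payoff game with objective $\mathbb{R} \setminus I$ from $v$; and $\mathbb{R} \setminus I$ is itself a finite union of at most $r+1$ real intervals. Applying the $\NP$-certificate construction to this dual game yields the $\coNP$ witness.

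The main obstacle is ruling out any need to simulate the infinite-memory combined strategy during verification: a priori one might worry that the running-average bookkeeping couples the levels of the certificate in a globally non-local way. The escape clause is the bound $\tfrac{(|V|+1) W}{n}$ in the proof of Lemma~\ref{lem:MPCorrect}, which holds automatically once $\sigma_>$ is positional and the threshold witness has been certified, so the correctness of the combined strategy is fully captured by the level-wise checks. The remaining technical work is to package the recursive verifier cleanly, confirm that all arithmetic with interval endpoints and weights stays within polynomial bit-length, and confirm the same for the $\coNP$ dual — together with a careful accounting when $r$ is part of the input rather than a fixed constant.
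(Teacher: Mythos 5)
First, note that the statement you are proving is stated in the paper only as a \emph{conjecture}: the authors explicitly write that Algorithm~\ref{alg:MPfull} yields a succinct description of a winning strategy by $2r$ positional sub-strategies, but that ``it is not clear that given such a certificate whether there exists an efficient algorithm for computing the winning region.'' Your proposal follows exactly the direction the authors suggest (certificate $=$ the positional sub-strategies arising from the recursive decomposition, $\coNP$ by player/complement symmetry), so you are not giving a genuinely different route --- you are attempting to fill in the step the authors could not, and the place where your argument is thinnest is precisely that step.

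The concrete gap is in your description of the certificate and its verification. Your certificate has a single chain of nested regions, one positional strategy and one recursive call per level, with total size $O(r\cdot|V|)$. But Algorithm~\ref{alg:MPfull} is not a straight-line recursion: at each level with $\inf I > -\infty$ there is a \textbf{repeat} loop that removes the sets $A'_i\cup B'_i$ over up to $|V|$ iterations, and the recursive call $\xMP_{(-\infty,a]\cup I}$ is made once \emph{per iteration}, each time on a different subgraph $G\setminus W_{i-1}$. For the player whose winning region is accumulated by this loop (which, because of the dualization step for objectives with $\inf I=-\infty$, is the player you must certify at every other level of the recursion), the certificate necessarily carries this layer structure: a per-iteration partition, a threshold witness valid on the subgraph of that iteration, and a recursive sub-certificate for each $B'_i$ on its own subgraph. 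Na\"ively unfolded this gives a recursion tree with branching factor $|V|$ per level, i.e.\ size $|V|^{O(r)}$ --- the running time of the algorithm, not a polynomial certificate. To rescue the claim you would need to argue (i) that the per-iteration sub-certificates live on pairwise disjoint regions and therefore merge into a single positional strategy per ``slot,'' (ii) that each per-layer check can be carried out on the correct subgraph using only the layer labelling, and (iii) that the layered checks compose soundly via the stabilization argument of Lemma~\ref{lem:MPCorrect} (plays consistent with the layered strategy can only move to earlier layers, hence stabilize). None of this appears in your write-up; the sentence ``the recursively verified sub-certificate supplies $\sigma_<$'' silently assumes the recursion has no branching. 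Your treatment of the threshold components (one-player mean-payoff check after fixing a positional strategy, the $\frac{(|V|+1)W}{n}$ bound in the combination argument) is correct but is the easy part; the loop structure and the soundness of the composed verifier are the actual content of the conjecture, and they remain open in your proposal exactly as they do in the paper.
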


\subsection{Lower bound}
The above conjecture would hold if we could solve interval mean-payoff games with only polynomially many calls to the mean-payoff threshold problem.
We now give a lower bound for the complexity of deciding interval mean-payoff games which suggests any such algorithm would yield quite remarkable results: we reduce parity games to interval mean-payoff games with small weights and small interval bounds.
In particular this implies that any pseudo-polynomial time algorithm (including polynomially many calls to the threshold problem) would yield a polynomial time algorithm for parity games.

\begin{theorem}
There is a polynomial time reduction from parity games to unary-encoded interval mean-payoff games.
\end{theorem}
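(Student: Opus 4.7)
The plan is to compose two polynomial-time reductions. First, I would invoke the reduction from parity games to unary-encoded interval liminf games (the implication (iii)$\Rightarrow$(i) of the preceding theorem in Section~\ref{sec:liminf}), which on a parity game with priorities bounded by $d$ produces an interval liminf game on the same graph whose weights are exactly the priorities and whose target $I$ is a union of singleton intervals, one per even priority. Second, I would design a polynomial-time reduction from interval liminf games to interval mean-payoff games that preserves unary encoding.

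For the second step, the idea is to replace each edge weight $w$ by a new weight $f(w)$ and to replace the union $I$ of singleton intervals by a carefully chosen union $I'$ of polynomially-bounded intervals, so that a cycle has its minimum weight in $I$ if and only if its mean-payoff under $f$ lies in $I'$. Intuitively, weights not in $I$ are penalised so strongly that any cycle touching such a weight has mean-payoff bumped out of $I'$, while cycles entirely on ``good'' weights have mean-payoff in $I'$. For a small number of distinct weights one can verify a concrete transformation: for example, with weights $\{1,2,3\}$ and $I=\{2\}$, taking $f(1)=-|V|$, $f(2)=1$, $f(3)=0$ and $I'=(0,1]$ works, since a cycle containing any $f(1)$-edge has mean-payoff at most $-1/|V|$, a cycle on $\{f(2),f(3)\}$ with at least one $f(2)$ has mean-payoff in $(0,1]$, and a cycle of only $f(3)$-edges has mean-payoff $0\notin I'$. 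I would then check that Eve's (resp.\ Adam's) positional liminf-winning strategy, interpreted on the same graph, produces cycles of the right minimum weight and hence mean-payoff in (resp.\ outside of) $I'$.

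The main obstacle I anticipate is scaling this transformation to arbitrary $d$ while keeping both weights and interval bounds polynomial in $|V|$ and $d$. A naïve extension of the pattern above assigns $f$ magnitudes that grow geometrically with $d$ (the analogue of Jurdziński's classical parity-to-mean-payoff reduction, giving weights of magnitude $|V|^{d}$), which is binary-encoded and therefore insufficient. To stay within unary encoding I expect the construction to exploit the full expressive power of \emph{multiple} intervals in $I'$: instead of forcing one dominant scale, the polynomially-bounded range of achievable mean-payoff values is partitioned into $O(d)$ fine-grained sub-intervals, one per even priority, each capturing precisely those cycles whose minimum weight is the corresponding priority. If needed, small auxiliary gadgets can amplify the relative influence of low-priority edges in the average without enlarging the graph or the weight magnitudes super-polynomially.

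Correctness in both directions reduces to a cycle-based case analysis that is made tight by the positional determinacy transferred from parity games through the Section~\ref{sec:liminf} reduction. In the forward direction, Eve's positional strategy yields plays whose eventual cycles have minimum weight in $I$, hence, by the cycle-classification property of the construction, mean-payoff in $I'$; in the converse direction, Adam's positional strategy yields eventual cycles whose minimum weight lies outside $I$, hence mean-payoff outside $I'$. Showing that the partition of the mean-payoff range faithfully separates these two cycle classes, using only polynomially-bounded weights and interval endpoints, is the technical crux of the argument.
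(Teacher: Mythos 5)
Your plan leaves the admitted ``technical crux'' unsolved, and unfortunately that crux is not just a matter of tuning constants: the second step of your reduction (a static re-weighting $f$ of the edges of the \emph{same} graph, plus a new union of intervals $I'$) is flawed in principle, not merely hard to scale. The problem is that the liminf of the weights of a play is determined by the \emph{set} of weights seen infinitely often, whereas $\underline{MP}$ of a play is \emph{not} determined by the set of simple cycles traversed infinitely often --- the adversary can interleave cycles with different averages, spending longer and longer in each, and drive $\underline{MP}$ to any point of the lower boundary of the convex hull of the cycle averages, in particular into a gap of $I'$. Your own $\{1,2,3\}$ example already exhibits this: take an Adam-controlled component containing both a cycle of weight-$3$ edges (average $f(3)=0$) and a cycle containing a weight-$2$ edge (average in $(0,1]$). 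Alternating them forever gives liminf weight $2\in I$, so Eve wins the liminf game, but by lengthening his stays on the all-$3$ cycle Adam forces $\underline{MP}=0\notin(0,1]$, so Eve loses the mean-payoff game. No choice of $f$ and $I'$ on an unmodified graph can avoid this, because the cycle-by-cycle correspondence you verify simply does not lift to plays. (This is the same phenomenon behind Figure~\ref{fig:infMem}: interval mean-payoff objectives are not unions of ``cycle classes''.)

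The paper's proof sidesteps this by changing the graph, not just the weights. Each parity-game vertex $v$ of priority $p$ is replaced by a gadget with self-loops of weights $p+1$ and $p-1$ (Figure~\ref{fig:parityGadget}), owned by Eve when $p$ is even and by Adam when $p$ is odd, and the target is $I=[0,1)\cup[2,3)\cup\cdots\cup[n,n+1)$. The owner of the gadget can \emph{actively steer} the running average into (or out of) $[p,p+\frac12]$ before releasing the token, so whenever the minimal priority seen infinitely often is even, Eve forces $\underline{MP}\in[p,p+\frac12]\subseteq I$, and symmetrically for Adam. All weights and endpoints are integers in $[0,|V|+1]$, so unary encoding is immediate --- there is no exponential blow-up to fight, because the hard work is done by the players' dynamic control inside the gadgets rather than by separating scales in a static weight assignment. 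If you want to rescue your two-step decomposition, you would need to insert comparable steering gadgets in the liminf-to-mean-payoff step, at which point you have essentially reconstructed the paper's direct reduction.
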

\begin{proof}
Let $(V,V_\exists, E, q_0, \Omega)$ be a (min-)parity game.  Without loss of generality we can assume that  the set of priorites is contained in $[0,|V|]$.  
We construct an interval mean-payoff game $(V',V'_\exists, E', w, q_0', I)$ as follows.
\begin{itemize}
\item $I = [0,1) \cup [2,3) \cup \cdots \cup [n,n+1)$ where $n$ is the smallest even integer greater than or equal to $|V|$;
\item $V' = V \cup V \times \{0,+,-\}$.  For simplicity we write $(v,\ast)$ as $v^\ast$;
\item $q_0' = q_0$;
\item $V'_\exists = V_\exists \cup \{v^0,v^+,v^- \st \Omega(v)\text{ is even}\}$;
\item $E'$ and $w$ are constructed as follows:
\begin{itemize}
	\item For each $(v,w) \in E$, $(v,w^0) \in E'$ and the weight of this edge is $\Omega(v)$,
	\item For each $v \in V$: $(v^0,v^+), (v^0,v^-), (v^+,v), (v^-,v) \in E'$ all with weight $\Omega(v)$, $(v^+,v^+) \in E'$ with weight $\Omega(v)+1$, and $(v^-,v^-) \in E'$ with weight $\Omega(v)-1$.
\end{itemize}
\end{itemize}
Intuitively, we replace each vertex in the original game with the gadget shown in Figure~\ref{fig:parityGadget}.  If the priority of the vertex is even then the gadget is controlled by Eve, and if it is odd then it is controlled by Adam.  The last vertex in the gadget is controlled by the player that controlled the original vertex.

\begin{figure}
\begin{center}
\begin{tikzpicture}[inner sep=2mm, ve/.style={rectangle, draw}, va/.style={circle, draw}]
\node[ve] (A){$v^0$};
\node[ve, right=of A, yshift=0.5cm] (A+){$v^+$};
\node[ve, right=of A, yshift=-0.5cm] (A-){$v^-$};
\node[va, right=of A+, yshift=-0.5cm] (B){$v$};
\node[left=1cm of A, yshift=-0.5cm](d1){};
\node[left=1cm of A, yshift=0.5cm](d2){};
\node[right=1cm of B, yshift=-0.5cm](d3){};
\node[right=1cm of B, yshift=0.5cm](d4){};
\path
(A) edge node[el]{$p$} (A+)
(A+) edge[loop, looseness=4,out=45,in=135] node[swap,el]{$p+1$} (A+)
(A+) edge node[el]{$p$} (B)
(A) edge node[el,swap]{$p$} (A-)
(A-) edge[loop, looseness=4,out=-45,in=-135] node[el]{$p-1$} (A-)
(A-) edge node[el,swap]{$p$} (B)
(d1) edge[dotted] (A)
(d2) edge[dotted] (A)
(B) edge[dotted] (d3)
(B) edge[dotted] (d4)
;
\end{tikzpicture}
\end{center}
\caption{Vertex gadget for vertex $v \in V \setminus V_\exists$ with even priority $p$}\label{fig:parityGadget}
\end{figure}
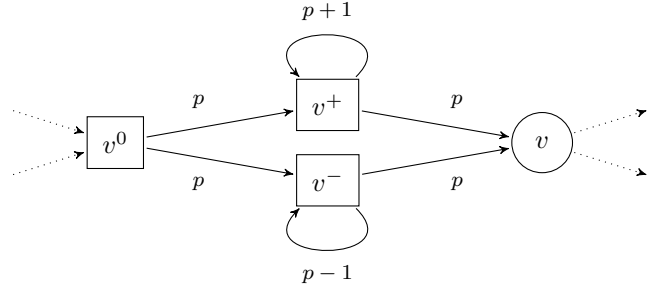

As the weights and interval boundaries are integers in $[0,|V|+1]$ this is clearly a polynomial time translation to a unary-encoded interval mean-payoff game.  We claim that Eve wins the parity game if and only if she wins the interval mean-payoff game.  Suppose she has a positional winning strategy $\sigma$ in the parity game.  We define her strategy $\sigma'$ as follows.  For any vertex $v \in V_\exists$ she moves to the vertex gadget corresponding to the vertex she would have moved to under $\sigma$.  That is, $\sigma'(v) = (\sigma(v),0)$.  Whenever the play reaches a vertex gadget that she controls (i.e.\ a vertex $v^0$ where $v$ has even priority $p$ in the parity game), her strategy is to remain in the gadget until the average weight of the current play lies in the interval $[p,p+\frac{1}{2}]$.  She does this by moving to $v^+$ if the current average is below the interval, and to $v^-$ if the average is above, and then staying at that vertex until the average weight reaches the interval.  Note that after sufficiently many steps this will always be possible.  When the average weight lies in $[p,p+\frac{1}{2}]$ she moves to $v$ and the game continues.  There is a clear $1$-$1$ correspondence between plays consistent with $\sigma$ and plays consistent with $\sigma'$, and if a play in the parity game visits a vertex with even priority $p$ infinitely often, then the running average of the corresponding play will lie in the interval $[p,p+\frac{1}{2}] \subseteq I$ infinitely often.  By construction, Adam can never reduce the mean-payoff below the interval $[p,p+\frac{1}{2}]$ unless the play reaches a gadget corresponding to a vertex of lower priority.  This is important because we use the $\liminf$ definition of mean-payoff.  Further, if he chooses to remain in a gadget indefinitely he will lose.  As all plays consistent with $\sigma$ have the property that the minimal priority visited infinitely often is even, it follows that for all plays $\pi$ consistent with $\sigma'$ there is some even priority $p$ such that $\underline{MP}(\pi) \in [p,p+\frac{1}{2}] \subseteq I$.  Thus the $\sigma'$ is winning for Eve.  For the converse we see that Adam can translate a winning strategy from the parity game in the same manner.

\end{proof}

\subsection{Single interval}\label{sec:singleMP}
We now examine in more detail the case when $I$ is a single interval.   As we can replace any strict threshold call with a non-strict threshold we can assume without loss of generality that $I$ is closed.
%
%
The simplification of Algorithm~\ref{alg:MPfull} to a single closed interval is given in Algorithm~\ref{alg:singleMP}.

\begin{algorithm}

\begin{algorithmic}
\REQUIRE A game graph $G$ and a bounded closed real interval $[a,b]$.
\ENSURE $(W^\exists,W^\forall)$ where $W^\exists$ ($W^\forall$) are the vertices from which Eve (Adam) has a winning strategy.
\STATE $W \leftarrow \emptyset$
\REPEAT
\STATE $(A,A') \leftarrow \xMP_{\geq a}(G)$
\STATE $(B,B') \leftarrow \xMP_{\leq b}(G\setminus A')$
\STATE $W \leftarrow W \cup A' \cup B'$
\STATE $G \leftarrow G \setminus (A' \cup B')$
\UNTIL{$A' \cup B' = \emptyset$}
\RETURN $(V \setminus W, W)$
\end{algorithmic}
\caption{$\xMP_{[a,b]}(G)$}\label{alg:singleMP}
\end{algorithm}



We observe that Algorithm~\ref{alg:singleMP} makes at most a linear number of calls to the mean-payoff threshold problem, so lies in the intersection of $\NP$ and $\coNP$.
\begin{theorem}  
Deciding if Eve wins a single interval mean-payoff game is in $\NP \cap \coNP$.  
\end{theorem}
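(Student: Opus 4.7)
The plan is to use Algorithm~\ref{alg:singleMP} itself as the witness of membership in $\NP \cap \coNP$. The argument has three parts: establish the algorithm's correctness, bound the number of threshold queries it issues, and invoke the closure of $\NP \cap \coNP$ under polynomial-time Turing reductions.

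For correctness I would first observe that Algorithm~\ref{alg:singleMP} is essentially the specialisation of Algorithm~\ref{alg:MPfull} to $I=[a,b]$. The only discrepancy is that the second query, $\xMP_{\leq b}$, is issued on $G\setminus A'$ rather than on the whole game graph. This is sound because every vertex in $A'$ is already a vertex from which Adam forces $\underline{MP} < a$, so it lies trivially in his interval-game winning region; excising it before the second call cannot change the final fixed point $W$. The remaining verification is exactly the argument of Lemma~\ref{lem:MPCorrect} restricted to the single-interval setting.

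For the query bound, note that whenever the main loop does not terminate we have $A'\cup B' \neq \emptyset$, so at least one vertex is added to $W$ and removed from the working graph. Consequently the loop performs at most $|V|+1$ iterations and therefore issues $O(|V|)$ calls in total to the mean-payoff threshold problem. By the results cited earlier in the excerpt, that threshold problem itself lies in $\NP \cap \coNP$, so Algorithm~\ref{alg:singleMP} constitutes a polynomial-time Turing reduction from the interval mean-payoff game to a problem in $\NP \cap \coNP$.

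To finish, I would invoke the standard closure of $\NP \cap \coNP$ under polynomial-time Turing reductions. Explicitly, for the $\NP$ direction a nondeterministic machine simulates Algorithm~\ref{alg:singleMP} and, at each of the $O(|V|)$ threshold queries, guesses the answer together with either an $\NP$ certificate (for a YES answer) or a $\coNP$ certificate (for a NO answer); because every query is itself in $\NP \cap \coNP$ both kinds of certificate are polynomial in size and verifiable deterministically in polynomial time. The machine then runs Algorithm~\ref{alg:singleMP} using the guessed answers and accepts iff $q_0 \in V\setminus W$. The $\coNP$ direction is symmetric (equivalently, swap Eve and Adam and appeal to determinacy of the interval mean-payoff game). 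I expect the only non-routine step to be writing out this nondeterministic-certify-and-simulate scheme carefully; everything else, given Lemma~\ref{lem:MPCorrect}, is bookkeeping.
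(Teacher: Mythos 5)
Your proposal is correct and follows essentially the same route as the paper, which simply observes that Algorithm~\ref{alg:singleMP} makes at most linearly many calls to the mean-payoff threshold problem (itself in $\NP\cap\coNP$) and concludes by closure of $\NP\cap\coNP$ under polynomial-time Turing reductions; you have merely spelled out the guess-and-certify simulation that the paper leaves implicit.
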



\subsubsection{Memory considerations}
The strategies for Adam and Eve described in the proof of Lemma~\ref{lem:MPCorrect} require infinite memory.  We now show, with a careful analysis, that in the case of a single interval this can be improved.
\begin{theorem}
Let $(G,I)$ be a single interval mean-payoff game.  If Adam has a winning strategy then he has a positional winning strategy.  If Eve has a winning strategy then she has a strategy that requires finite memory.
\end{theorem}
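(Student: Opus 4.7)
The theorem has two parts, which I would prove separately.

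\textbf{Adam's positional strategy.} I would combine the positional sub-strategies produced by Algorithm~\ref{alg:singleMP}. At iteration $i$, the calls $\xMP_{\geq a}(G\setminus W_{i-1})$ and $\xMP_{\leq b}(G\setminus W_{i-1}\setminus A'_i)$ return, by classical positional determinacy of the mean-payoff threshold problem, positional Adam-strategies $\sigma_{a,i}$ on $A'_i$ forcing $\underline{MP}<a$ in the subgame $G\setminus W_{i-1}$, and $\sigma_{b,i}$ on $B'_i$ forcing $\underline{MP}>b$ in $G\setminus W_{i-1}\setminus A'_i$. Define $\tau^\star$ on $W=\bigcup_i(A'_i\cup B'_i)$ by using $\sigma_{a,i}$ or $\sigma_{b,i}$ according to the unique iteration at which each vertex enters $W$. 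I would then verify three claims: (i)~a play from $W$ consistent with $\tau^\star$ never leaves $W$, since the trapping property of mean-payoff winning regions keeps $\sigma_{a,i}$ and $\sigma_{b,i}$ inside their layers within the relevant subgame, and Eve cannot have an edge from $V_\exists\cap W$ to $V\setminus W$ (otherwise $W$ would not be her losing region for the interval game); (ii)~the play eventually stabilises inside a single layer, because each departure from a layer strictly decreases the layer index under the natural lexicographic ordering $(1,A),(1,B),(2,A),\ldots$ and there are only finitely many layers; (iii)~once the tail lies inside $A'_i$ (respectively $B'_i$) it is consistent with $\sigma_{a,i}$ (respectively $\sigma_{b,i}$) in the corresponding subgame, so $\underline{MP}<a$ (respectively $>b$), placing the payoff outside $[a,b]$.

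\textbf{Eve's finite-memory strategy.} Let $\sigma_>,\sigma_<$ be Eve's positional winning strategies on $V\setminus W$ for $\underline{MP}\geq a$ and $\underline{MP}\leq b$ respectively. The infinite-memory construction of Lemma~\ref{lem:MPCorrect} switches between them based on the exact running average; my plan is to replace this exact average by a clipped counter. Fix $t\in[a,b]$ and maintain a counter $c_k\in\{-M,\ldots,+M\}$ tracking the cumulative deviation $\sum_{j<k}w(e_j)-tk$ clipped to this window. Eve plays $\sigma_>$ in mode $+$ and updates $c$; if an update would push $c$ above $+M$ she switches to mode $-$ and plays $\sigma_<$; symmetrically she switches back when an update would push $c$ below $-M$. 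Her memory consists of the mode bit together with the counter, totalling $O(M)$ states.

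The main obstacle is calibrating $M$ so that clipping is harmless, and the key estimate is the classical cycle-decomposition bound for positional plays in a finite weighted graph: under $\sigma_>$ the cumulative weight satisfies $C_k\geq ak-O(|V|W)$ for every $k$ and every Adam response, and symmetrically $C_k\leq bk+O(|V|W)$ under $\sigma_<$. Setting $M=\Theta(|V|W)$, correctness splits into three cases. If Eve eventually stays in mode $+$, then $\underline{MP}\geq a$ from $\sigma_>$, and since each clipping at $-M$ only makes $c_k$ overshoot the true deviation upward we have $c_k\geq C_k-tk$; combined with $c_k\leq+M$ this yields $C_k\leq tk+M$, hence $\underline{MP}\leq t\leq b$. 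The mode-$-$ case is symmetric. If Eve switches modes infinitely often, a direct calculation over one full alternation of segments of lengths $\Theta(M/(t-\alpha^+))$ and $\Theta(M/(\alpha^--t))$, where $\alpha^\pm$ denote Adam's achieved averages on each kind of segment, shows that the long-run average collapses to $t$ regardless of $\alpha^\pm$, pinning $\underline{MP}=t\in[a,b]$.
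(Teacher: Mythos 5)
Your argument for Adam is essentially the paper's: play the positional threshold strategy of the layer at which a vertex was removed, observe that consistent plays only move to earlier layers and hence stabilise, and conclude positionally. That part is fine.

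Your argument for Eve, however, replaces the paper's mechanism (a stack that pops completed cycles and a counter that records only the weights of \emph{bad} cycles) with a clipped counter for the running deviation $C_k-tk$, and this is where there is a genuine gap. The single sentence ``a direct calculation \ldots shows that the long-run average collapses to $t$ regardless of $\alpha^\pm$'' is exactly the step that fails. Writing $C_k = tk + c_k + T_k - B_k$, where $B_k$ and $T_k$ are the total amounts clipped at the bottom and top, the clipping incurred at each mode switch is of order $W$, and the cycle-decomposition slack of a positional strategy is incurred afresh (up to $|V|W$) in \emph{each} segment, because the open, uncompleted portion of a segment is abandoned at the switch. Since a segment can be as short as $2M/(W+|t|)$ steps, these losses accumulate at rate $\Theta(|V|W^2/M)$ per step, which for $M=\Theta(|V|W)$ is $\Theta(W)$ and does not vanish; so the counter being bounded tells you nothing about $\liminf C_k/k$. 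Concretely, take $V_\exists=\{u\}$, Adam's vertices $x,y$, edges $u\to x$ and $u\to y$ of weight $0$, two edges $x\to u$ of weights $2$ and $200$, two edges $y\to u$ of weights $4$ and $-200$, and $I=[1,2]$. Eve wins ($\sigma_>$ is ``go to $x$'', $\sigma_<$ is ``go to $y$''), but against your strategy with $t=1.5$ and $M=600$, Adam always picks $200$ in mode $+$ and $-200$ in mode $-$: the counter gains $197$ per round going up and loses $203$ per round going down, so with clipping at the boundaries the play becomes periodic with $7$ up-rounds and $6$ down-rounds per period, giving $\underline{MP}=200/26\approx 7.7\notin[1,2]$. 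No calibration $M=\Theta(|V|W)$ repairs this, since the bias scales as $W$ divided by the hidden constant.

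The paper's construction avoids precisely this problem: the counter changes only by the weight of a popped cycle whose average lies outside $I$, the switching rule keeps that counter oscillating around $0$ (hence bounded by $\pm|V|W$ \emph{globally}), the un-popped stack is a repetition-free path of weight at most $|V|W$ \emph{globally}, and the discarded good cycles contribute an average provably inside $I$. All error terms in $w(\pi[..k])=w_k+c_k+s_k$ are thus $O(|V|W)$ independently of the number of switches, which is what makes the $O(1/k)$ convergence go through. If you want to keep a deviation-counter formulation, you would need to account in units of completed cycles (or otherwise prevent the per-segment slack and per-switch clipping from recurring), at which point you have essentially rebuilt the paper's stack argument.
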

\begin{proof}
Algorithm~\ref{alg:singleMP} consists of repeatedly removing vertices from which Adam can either ensure the mean-payoff lies above or below $I$.  Clearly Adam has a winning strategy from any vertex removed: he plays his (positional) winning strategy corresponding to the level at which the vertex was removed, until the play reaches a vertex removed at an earlier stage.  We observe that any consistent play will never return to a vertex removed at a later stage (as such vertices are in the winning set for Eve at the same point of the iteration), so this strategy is in fact positional.  Any play consistent with this strategy will eventually stabilize at some stage of the iteration, whereupon Adam's strategy for that stage will ensure the mean-payoff lies outside $I$.  We also observe that this result follows from the fact that the objective is prefix-independent and convex, so from~\cite{Kop07} Adam has a positional winning strategy.

The idea behind Eve's finite memory strategy on $W^\exists$ is to keep track of the total weight seen so far (rather than the average as in the proof of Lemma~\ref{lem:MPCorrect}) \emph{modulo cycles with average weight in $I$}. 
 This ensures, with the strategy outlined below, that the total weight will remain within some bounded range, and hence the strategy will only require finite memory.

By subtracting a constant from the weights of all edges and the interval bounds, we can assume that $0 \in I$. 
We observe on the vertices in $W^\exists$ Eve has two (positional) strategies: $\sigma_<$ which ensures $\underline{MP} \leq \sup I$ and  $\sigma_>$ which ensures $\underline{MP} \geq \inf I$. 
Eve's strategy is to alternate between these two strategies, as in the proof of Lemma~\ref{lem:MPCorrect}, however now she changes when the following condition is met.
We keep a stack-based history of the current play and when a cycle $\chi$ is completed we remove it from the history of the current play, keeping the first vertex of the cycle on the top of the stack.  If $w(\chi)/|\chi| \in I$ we say $\chi$ is \emph{good} and she continues to play her current strategy.  If $w(\chi)/|\chi| \notin I$, she adds $w(\chi)$ to a counter.  Note that if she was playing $\sigma_<$ she would only subtract from the counter and if she was playing $\sigma_>$ then she would only add to the counter because $\sigma_<$ and $\sigma_>$ are winning positional strategies.  She switches strategies if the counter changes sign.  That is, if she was playing $\sigma_{<}$ and the counter value falls below $0$ she switches to $\sigma_>$, and she switches to $\sigma_<$ if she was playing $\sigma_>$ and the counter value goes above $0$.  
Clearly this strategy requires only exponential memory: Eve needs only to store at most $|V|$ vertices in the history and because $\sigma_>$ and $\sigma_<$ are positional the counter values are bounded by $\pm |V|\cdot W$.  We claim that any play $\pi$ consistent with this strategy has $\underline{MP}(\pi) \in I$.  

Let $\pi$ be a play consistent with the strategy.  Let us consider the state of the strategy after $k$ steps of the play.  Let $w_k$ be the total weight of all good cycles popped, and $l_k\leq k$ their total length.  Let $c_k$ denote the counter value.  We observe that the stack contents being stored are always a finite prefix of $\pi$ (when read from bottom to top), so we can define $s_k$, the weight of the stack, as the weight of the corresponding prefix.  It is clear from the definition of the strategy that:
\[ w(\pi[..k]) = w_k + c_k + s_k.\]
Also, $-|V|\cdot W \leq c_k,s_k \leq |V|\cdot W$, and $\frac{w_k}{l_k} \in I$.  As $0 \in I$ we have $\inf I \leq 0 \leq \sup I$, so 
\[ \inf I \leq \frac{l_k (\inf I)}{k} \leq \frac{w_k}{k} \leq \frac{w_k}{l_k} \leq \sup I.\]  Therefore,
\begin{eqnarray*}
\frac{w(\pi[..k])}{k} &\geq& \frac{-2|V|\cdot W}{k} + \inf I\quad\rightarrow\quad\inf I\text{ as $k \rightarrow \infty$, and}\\
\frac{w(\pi[..k])}{k} &\leq& \frac{2|V|\cdot W}{k} + \sup I\quad\rightarrow\quad\sup I\text{ as $k \rightarrow \infty$.}
\end{eqnarray*}
Hence, as $I$ is closed, $\underline{MP}(\pi) \in I$ as required.
\end{proof}

\section{Discount sum games}\label{sec:DS}
In this section we consider interval discount sum games.  Here we make a distinction between whether or not singleton intervals (and singleton gaps between intervals) are permitted, because unlike other payoff functions considered in this paper there is a marked difference between the corresponding games.  We show that for non-singleton intervals the problem of determining the winner is $\PSPACE$-complete and as a consequence of our algorithm we show that finite memory stategies suffice. For singleton intervals (including the exact value problem) our $\PSPACE$-hardness result holds, but is not even known if determining the winner is decidable.  We give a simple example that shows that infinite memory is required for winning strategies in this case. 
\subsection{Single, non-singleton intervals}
We show that the problem for discount sum games in this case is $\PSPACE$-complete for any discount factor $\lambda$.

\paragraph*{Lower bound.}
To show $\PSPACE$-hardness we reduce from the subset sum game defined in~\cite{FJ13}.  The subset sum game is specified by a target $t \in \mathbb{N}$
and a list of pairs of natural numbers $(a_1,a_1'),(a_2,a_2'), \ldots, (a_n,a_n')$.  The game takes $n$ rounds, in round $i$, one player (Adam if $i$ is odd, Eve if $i$ is even) chooses $a_i$ or $a_i'$.  After $n$ rounds Eve wins if and only if the sum of the selected numbers is $t$.  Given an instance of the subset sum game we construct the following interval discount sum game (for discount factor $\lambda$):
\begin{itemize}
\item $V = \{v_1, v_2, \ldots, v_{n+1}\}$, 
\item $V_\exists = \{v_i \st i\text{ is even}\}$,
\item $q_0 = v_1$, 
\item $E$ and $w$ defined as follows:
\begin{itemize}
\item For $1 \leq i \leq n$ there are two edges from $v_i$ to $v_{i+1}$, one with weight $\frac{a_i}{\lambda^{i-1}}$ and one with weight $\frac{a_i'}{\lambda^{i-1}}$,
\item There is a loop with weight $0$ on $v_{n+1}$.
\end{itemize}
\item $I = (t-1,t+1)$
\end{itemize}
The reduction is illustrated in Figure~\ref{fig:subDS}.

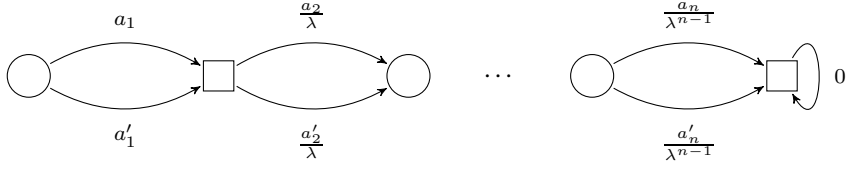
\begin{figure}
\begin{center}
\begin{tikzpicture}[inner sep=2mm, ve/.style={rectangle, draw}, va/.style={circle, draw}]
\node[va] (A){};
\node[ve, right=of A] (B){};
\node[va, right=of B] (C){};
\node[right=of C,xshift=-1.5cm] (X){\ldots};
\node[va,right=of X,xshift=-1.5cm] (D) {};
\node[ve,right=of D] (E) {};

\path
 (A) edge[bend left] node[el]{$a_1$} (B)
(A) edge[bend right] node[el,swap]{$a_1'$} (B)
 (B) edge[bend left] node[el]{$\frac{a_2}{\lambda}$}(C)
(B) edge[bend right] node[el,swap]{$\frac{a_2'}{\lambda}$} (C)
 (D) edge[bend left] node[el]{$\frac{a_n}{\lambda^{n-1}}$} (E)
(D) edge[bend right] node[el,swap]{$\frac{a_n'}{\lambda^{n-1}}$}(E)
(E) edge[loop, looseness=6, out=60, in=-60] node[el]{$0$} (E);

\end{tikzpicture}
\end{center}
\caption{Reduction from subset sum games to interval discount sum games}\label{fig:subDS}
\end{figure}

Note that as $\log\left (\frac{a}{\lambda^n}\right) = \log(a) - n\cdot \log(\lambda)$ the binary representations of the weights on this graph are still polynomial in the size of the input, so this is a polynomial time translation. 
It is clear that a play in this game corresponds to a selection of elements from the pairs, and the discounted sum of the play is equal to the sum of the corresponding elements.  As this sum is always an integer, the discounted sum lies in the interval $(t-1,t+1)$ if and only if the sum is equal to $t$.  Thus this is a polynomial time reduction from subset sum games to interval discounted sum games.

A corollary of this construction is that positional strategies are not sufficient for interval discount sum games.

\paragraph*{Upper bound.}
Given $v \in V$ and strategies $\sigma$ and $\tau$ for Eve and Adam respectively, we define $\val{v}{\sigma}{\tau}$ to be the payoff of the unique play from $v$ consistent with $\sigma$ and $\tau$.
Two important (memoryless) strategies for Eve are $\sigma_{\max}$ and $\sigma_{\min}$, the strategies which, for all states $v$, maximize $\min_{\tau} \val{v}{\sigma}{\tau}$ and minimize $\max_{\tau} \val{v}{\sigma}{\tau}$ respectively.

The idea behind the upper bound centres around the observation that after many steps the remainder of any play does not contribute much to the overall discounted sum.  If the target interval is non-singleton then after sufficiently many steps the problem reduces to the classical threshold problem.  Thus we can stop the game after finitely many steps when it becomes a trivial matter to determine if the overall discounted sum will lie in the interval or not.  The key lemma for the result is the following:
\begin{lemma}\label{lem:finite}
Suppose Eve has a winning strategy to ensure the discounted sum lies in an interval $I$, and let 
\[N =  \left \lfloor \frac{\log(|I|) + \log(1-\lambda) - \log(2W)}{\log \lambda} \right \rfloor\] where $W$ is the maximum absolute value of any weight occurring in $G$. 
Then Eve has a winning strategy that agrees with either $\sigma_{\max}$ or $\sigma_{\min}$ after $N$ steps.
\end{lemma}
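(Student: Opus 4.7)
The plan is to exploit the fact that by choice of $N$, the tail contribution to the overall discounted sum is small. Rearranging the definition of $N$ yields $\frac{2\lambda^N W}{1-\lambda} \leq |I|$, and since $|\lambda^N DS_\lambda(\pi[N..])| \leq \frac{\lambda^N W}{1-\lambda}$ for any tail, the range of possible tail contributions to $DS_\lambda(\pi)$ occupies an interval of length at most $|I|$.

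Suppose $\sigma^*$ is a winning strategy for Eve, and fix a length-$N$ history $h = v_0 v_1 \cdots v_N$ consistent with $\sigma^*$, with partial discounted sum $s_N$. Writing $DS_\lambda(\pi) = s_N + \lambda^N DS_\lambda(\pi[N..])$, the winning condition on extensions of $h$ is equivalent to forcing the continuation value into the rescaled target $T := \lambda^{-N}(I - s_N)$, which is an interval of length $\lambda^{-N}|I| \geq \frac{2W}{1-\lambda}$. Letting $L(v_N)$ and $U(v_N)$ denote the infimum and supremum of $DS_\lambda$ over all plays from $v_N$, one has $U(v_N) - L(v_N) \leq \frac{2W}{1-\lambda} \leq |T|$.

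The crux is a case analysis on the position of $T$ relative to $[L(v_N), U(v_N)]$. Since $\sigma^*$ is winning, $T$ meets $[L(v_N),U(v_N)]$; combined with $|T| \geq U(v_N) - L(v_N)$, one of three situations arises. In case (i) $T$ contains all of $[L(v_N),U(v_N)]$, so every Eve strategy wins from $v_N$ and in particular $\sigma_{\max}$ does. In case (ii) $\inf T > L(v_N)$, which then forces $\sup T > U(v_N)$: the restriction of $\sigma^*$ to histories extending $h$ is a winning Eve strategy on the subgame from $v_N$ for the objective $DS_\lambda \in T$, and thus in particular forces $DS_\lambda \geq \inf T$, whence $\max_\sigma \min_\tau \val{v_N}{\sigma}{\tau} \geq \inf T$; so $\sigma_{\max}$ from $v_N$ achieves values in $[\inf T, U(v_N)] \subseteq T$ and wins. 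Case (iii) $\sup T < U(v_N)$ is symmetric and handled by $\sigma_{\min}$.

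Eve's witnessing strategy plays $\sigma^*$ for the first $N$ steps and then, at each state $v_N$ reachable via $\sigma^*$, switches to $\sigma_{\max}$ or $\sigma_{\min}$ according to the applicable case. The main obstacle I anticipate is endpoint bookkeeping: the three cases above need to be carefully adapted for open versus closed $I$ and for the exact strict-or-weak form of $\frac{2\lambda^N W}{1-\lambda} \leq |I|$ coming from the floor in the definition of $N$. These technicalities are routine but demand consistent conventions. The argument also implicitly relies on the positional determinacy of classical discounted sum games, which guarantees that $\sigma_{\max}$ and $\sigma_{\min}$ exist as memoryless strategies.
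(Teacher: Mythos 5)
Your proposal follows the same route as the paper's proof: play the given winning strategy for the first $N$ steps, observe that the tail contribution is confined to a window narrower than the target, and case-split on whether the top or bottom of that window lands in the target to decide between $\sigma_{\max}$ and $\sigma_{\min}$. Your rescaled target $T=\lambda^{-N}(I-s_N)$ and reachable interval $[L(v_N),U(v_N)]$ are a normalized restatement of the paper's inequalities, and your three cases collapse to the paper's dichotomy.

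There are two points to tighten. First, an off-by-one that, as written, stops the argument from launching: since $N=\lfloor Q\rfloor\le Q$ with $Q=\frac{\log|I|+\log(1-\lambda)-\log(2W)}{\log\lambda}$ and $\log\lambda<0$, one gets $\lambda^{N}\ge \frac{|I|(1-\lambda)}{2W}$, i.e.\ $\frac{2\lambda^{N}W}{1-\lambda}\ge |I|$ --- the \emph{reverse} of your claimed inequality. The bound you need holds only for exponents $n>N$, which is why the paper's displayed formulas carry $\lambda^{N+1}$; with your decomposition at exponent $N$ the key premise $|T|\ge U(v_N)-L(v_N)$ is not available, so the switch must happen one step later (equivalently, read ``after $N$ steps'' as ``$N+1$ edges traversed''). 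Second, the endpoint bookkeeping you defer is not entirely cosmetic: in case (ii), $\sigma_{\max}$ only guarantees payoffs $\ge\inf T$, and $\inf T$ may lie outside $T$ when $I$ is open at that end. The clean repair is to observe that $\min_\tau \val{v_N}{\sigma^*}{\tau}$ is attained (the plays consistent with $\sigma^*$ form a compact set and $DS_\lambda$ is continuous), hence lies in $T$, and $\sigma_{\max}$ guarantees at least that attained value; note the paper elides the same point via the assertion $\val{v}{\sigma}{\tau}\le\val{v}{\sigma_{\max}}{\tau}$ for every $\tau$, which is not literally true of an optimal strategy against arbitrary $\tau$. Neither issue changes the approach, which is the correct one.
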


Note that whether the strategy agrees with $\sigma_{\max}$ or $\sigma_{\min}$ depends on the play up to the $N$-th step.  It is feasible that against one strategy of Adam this strategy will agree with $\sigma_{\max}$ but against another strategy it will agree with $\sigma_{\min}$.

\begin{proof}
We first observe that $N$ is chosen such that for all $n>N$ we have
\begin{equation}\label{eqn:small}
|I| > \lambda^n \cdot \left( \frac{2W}{1-\lambda} \right).
\end{equation}
That is, after the $N$-th step of any play, the overall contribution of the remainder of the play is restricted to an interval smaller than $I$.

Let $\sigma$ be a winning strategy for Eve.  The desired winning strategy will follow $\sigma$ for $N$ steps and then one of $\sigma_{\max}$ or $\sigma_{\min}$ depending on the value of the play in a manner described presently.  Suppose after $N$ steps the current play has value ${\tt x}$ and is in state $v$.   As $\sigma$ is a winning strategy, we have for any strategy $\tau$ for Adam:
\begin{equation}\label{eqn:strat} 
{\tt x} + \lambda^{N+1} \cdot  \val{v}{\sigma}{\tau} \in I.
\end{equation} 
Now, as $|\val{v}{\sigma}{\tau}| \leq \frac{W}{1-\lambda}$, it follows from (\ref{eqn:small}) and (\ref{eqn:strat}) that at least one of the following is true:
\begin{subequations}\begin{align}
{\tt x} + \lambda^{N+1} \cdot \frac{W}{1-\lambda} &\in I,\text{ or}\label{eqn:below}\\
{\tt x} - \lambda^{N+1} \cdot \frac{W}{1-\lambda} &\in  I.\label{eqn:above}
\end{align}
\end{subequations}
If (\ref{eqn:below}) holds then we follow $\sigma_{\max}$, otherwise we follow $\sigma_{\min}$.  To show that the resulting strategy is winning,
let us suppose (\ref{eqn:below}) holds, the case for (\ref{eqn:above}) being similar.  From the definition of $\sigma_{\max}$ we have, for any state $w$ and any strategy $\tau$ of Adam:
\[ \val{w}{\sigma}{\tau} \leq \val{w}{\sigma_{\max}}{\tau} \leq \frac{W}{1-\lambda}.\]
Hence it follows from (\ref{eqn:strat}) that for any strategy $\tau$ of Adam:
\[{\tt x} + \lambda^{N+1} \cdot \val{v}{\sigma_{\max}}{\tau} \geq {\tt x} + \lambda^{N+1} \cdot  \val{v}{\sigma}{\tau} \in I,\]
and from (\ref{eqn:below}):
\[ {\tt x} + \lambda^{N+1} \cdot \val{v}{\sigma_{\max}}{\tau} \leq {\tt x} + \lambda^{N+1} \cdot \frac{W}{1-\lambda}  \in I.\]
Thus the payoff of any play consistent with this strategy lies in $I$ and is therefore winning for Eve.
\end{proof}

\begin{corollary}
Finite memory strategies are sufficient in non-singleton interval discount sum games.
\end{corollary}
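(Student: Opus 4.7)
The plan is to derive this corollary directly from Lemma~\ref{lem:finite}, which already does the main work: it produces a winning strategy $\sigma^\star$ that, after the first $N$ steps of any play, coincides with one of the two memoryless strategies $\sigma_{\max}$ or $\sigma_{\min}$, and the choice between these two is determined entirely by the value ${\tt x}$ and vertex $v$ attained at step $N$. The only remaining task is to realise $\sigma^\star$ by a finite-state machine.

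I would describe such a machine with two phases. In the first phase, the machine records the current partial play, whose length is at most $N$; since the arena is finite and $N$ depends only on $|I|$, $W$ and $\lambda$, there are at most $|V|^{N+1}$ such prefixes, so this phase uses only finitely many memory states, with transitions dictated by the witness strategy $\sigma$ from Lemma~\ref{lem:finite}. On reaching step $N$, the machine reads off the current vertex $v$ and the accumulated discounted value ${\tt x}$ from its memory and decides, according to the case analysis in the proof of Lemma~\ref{lem:finite} (namely whether~(\ref{eqn:below}) or~(\ref{eqn:above}) holds), which of $\sigma_{\max}$ or $\sigma_{\min}$ to commit to. It then transitions into a second phase consisting of just two memory states, labelled ``play $\sigma_{\max}$'' and ``play $\sigma_{\min}$''. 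Since both of these strategies are memoryless, no further memory is needed thereafter.

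The total memory is therefore bounded by the number of play prefixes of length at most $N$ plus two states for the second phase, which is finite. The only point that needs verification is that the case distinction of Lemma~\ref{lem:finite} can genuinely be resolved at step $N$ using only finite information about the history, and not by foreknowledge of Adam's future moves. This is exactly what the lemma and its proof assert: the choice between~(\ref{eqn:below}) and~(\ref{eqn:above}) is a condition on ${\tt x}$ and $v$ alone. I do not anticipate any technical obstacle beyond this translation from the ``strategy description'' of Lemma~\ref{lem:finite} into the finite-state-machine formalism defined in Section~\ref{sec:prelim}.
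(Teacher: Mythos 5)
Your argument is correct and is essentially the (implicit) proof the paper intends: the corollary is stated as an immediate consequence of Lemma~\ref{lem:finite}, and your finite-state realisation---record the play prefix for the first $N$ steps (finitely many states since the arena is finite and $N$ is fixed by $|I|$, $W$ and $\lambda$), then commit to one of the two memoryless strategies $\sigma_{\max}$ or $\sigma_{\min}$ based on $v$ and ${\tt x}$---is exactly the right way to make that precise. (For full coverage of the corollary one would add that Adam's finite-memory strategy follows by the symmetric argument applied to the complement objective, which is again non-singleton; the paper leaves this implicit as well.)
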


The algorithm for determining the winner of a non-singleton interval discount sum game is straightforward. We run an alternating Turing Machine for $N$ steps to guess an initial
play.  Note that $N$ is polynomial in the size of the input, so this can be done in $\PSPACE$.  Suppose the play ends in state $v$ with the current discounted sum ${\tt x}$.  We compute the four values:
\[ \begin{array}{lcl}
\maxmax = \max_{\tau} \val{v}{\sigma_{\max}}{\tau}&\quad& \minmax =  \min_{\tau} \val{v}{\sigma_{\max}}{\tau} \\ \maxmin =  \max_{\tau} \val{v}{\sigma_{\min}}{\tau} &\quad& \minmin = \min_{\tau} \val{v}{\sigma_{\min}}{\tau}.\end{array}\]
These are computable in $\NP \cap \coNP$: $\minmax$ and $\maxmin$ using the standard algorithm for discount sum games, and $\maxmax$ ($\minmin$) by fixing $\sigma_{\max}$ ($\sigma_{\min}$ respectively), computed in the previous step, and treating the resulting game as a solitaire discount sum game with Adam trying to maximize (minimize) the payoff.  Finally we check if either:
\begin{eqnarray*}
{\tt x} + \lambda^{N+1}\cdot\minmax \in I &\text{ and }& {\tt x} + \lambda^{N+1}\cdot\maxmax \in I,\text{ or}\\
{\tt x} + \lambda^{N+1}\cdot\minmin \in I &\text{ and }& {\tt x} + \lambda^{N+1}\cdot\maxmin \in I.
\end{eqnarray*}
It is clear that one of the above conditions holds if and only if $\sigma_{\max}$ or $\sigma_{\min}$ is winning from the current position.  Therefore, from Lemma~\ref{lem:finite}, one of the above conditions holds if and only if Eve has a winning strategy.

\begin{theorem}
Let $G$ be a game graph, $I \subseteq \mathbb{R}$ a non-singleton real interval and $\lambda \in (0,1)$.  
Deciding if Eve wins the interval discount sum game $(G,I,\lambda)$ is $\PSPACE$-complete.
\end{theorem}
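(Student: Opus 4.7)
The statement has both a lower bound and an upper bound component, and most of the work has already been done in the preceding development. My plan is to assemble the pieces rather than prove anything truly new.

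For the lower bound, I would invoke the subset sum game reduction already constructed above. Since subset sum games are $\PSPACE$-hard (\cite{FJ13}), and the reduction is polynomial-time (the binary representations of $a_i/\lambda^{i-1}$ remain polynomial in the input size, as noted), it follows that non-singleton interval discounted sum games are $\PSPACE$-hard. I would briefly remark that the reduction uses only a single non-singleton interval, so hardness holds even in that restricted case.

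For the upper bound, I would formalize the algorithm described informally after Lemma~\ref{lem:finite}. The idea is to use the fact that $\APTIME = \PSPACE$. I would set up an alternating polynomial-time procedure that: (i) guesses the play for the first $N+1$ steps, where $N$ is the polynomial bound from Lemma~\ref{lem:finite} (existential branching at Eve-vertices, universal branching at Adam-vertices), maintaining the partial discounted sum ${\tt x}$ and current vertex $v$; (ii) after $N+1$ steps, accepts if and only if one of the two pairs of conditions on $\maxmax, \minmax, \maxmin, \minmin$ (with respect to ${\tt x} + \lambda^{N+1}\cdot(\cdot)$) lies in $I$. Each of the four quantities is computable in $\NP \cap \coNP \subseteq \PSPACE$ using the standard algorithms for discounted sum threshold and, for $\maxmax, \minmin$, by first solving the mean case for $\sigma_{\max}$ or $\sigma_{\min}$ and then treating the one-player game. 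Since the check at the leaf is in $\PSPACE$, and the overall machine runs in alternating polynomial time, the whole procedure can be simulated in $\PSPACE$.

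Correctness of the upper bound procedure is precisely Lemma~\ref{lem:finite}: if Eve wins, she wins by a strategy whose behavior past step $N$ is either $\sigma_{\max}$ or $\sigma_{\min}$, and the choice between them depends only on the current $(v, {\tt x})$; conversely, if one of the two pairs of conditions holds at every leaf of the alternating tree, then stitching together the leaf-strategies produces a winning strategy for Eve.

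The main subtlety I would be careful about is managing the numerical precision of ${\tt x}$ during the alternating simulation: the partial sum ${\tt x}$ involves powers of $\lambda$ and could in principle have large bit-length, but since $\lambda$ is rational and $N$ is polynomial, ${\tt x}$ has polynomially many bits and can be stored on the tape. The second subtlety is justifying that $\maxmax$ and $\minmin$ are in $\NP \cap \coNP$; this follows because once the positional strategy $\sigma_{\max}$ (respectively $\sigma_{\min}$) is fixed, the remaining game is a one-player discounted sum optimization which is solvable in polynomial time by standard shortest-path style techniques, and $\sigma_{\max}, \sigma_{\min}$ themselves can be guessed and verified in $\NP \cap \coNP$ using the standard discounted sum threshold algorithm.
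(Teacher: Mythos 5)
Your proposal is correct and follows essentially the same route as the paper: the lower bound via the subset sum game reduction with interval $(t-1,t+1)$, and the upper bound by an alternating polynomial-time simulation of the first $N$ steps (using Lemma~\ref{lem:finite}) followed by the four-value check on $\sigma_{\max}$ and $\sigma_{\min}$. Your added remarks on the bit-length of ${\tt x}$ and on why $\maxmax$ and $\minmin$ lie in $\NP\cap\coNP$ are just careful elaborations of details the paper states more briefly.
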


We observe that if the weights, interval bounds and discount factor are all encoded in unary then $N$ is logarithmic in the size of the input and $\maxmax$, $\minmax$, $\maxmin$, and $\minmin$ can all be computed in polynomial time using a pseudo-polynomial time algorithm for the threshold problem for discount sum games (see e.g.~\cite{zp96}).  Thus the above algorithm runs in polynomial time.

\begin{theorem}
Let $G$ be a game graph, $I \subseteq \mathbb{R}$ a non-singleton real interval and $\lambda \in (0,1)$ all encoded in unary.  
Deciding if Eve wins the interval discount sum game $(G,I,\lambda)$ is in $\P\TIME$.
\end{theorem}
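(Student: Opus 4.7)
The plan is to reuse the algorithm developed for the non-singleton binary case, observing that under unary encoding each of its sub-procedures becomes polynomial-time. Recall that the algorithm has two pieces: an $N$-step alternating simulation of an initial play (with $N$ the horizon from Lemma~\ref{lem:finite}), followed at the resulting state $(v,{\tt x})$ by the computation of the four threshold values $\maxmax$, $\minmax$, $\maxmin$, $\minmin$ at $v$ and a check of one of the two sufficient conditions of Lemma~\ref{lem:finite}. Correctness is inherited from the binary case.

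First, I would verify the polynomial-time bounds on the sub-procedures. Under unary encoding the quantities $W$, $|I|$, $1-\lambda$, and $1/|\log\lambda|$ are all polynomially bounded in the input size $n$, so substituting into the formula of Lemma~\ref{lem:finite} gives $N$ polynomial in $n$ (and logarithmic in $n$ under the standard convention that $\lambda$ is bounded away from $1$). The values $\minmax$ and $\maxmin$ are the classical minimax and maximin values of a two-player discount sum game, solvable in pseudo-polynomial time by~\cite{zp96}, which is polynomial in the unary input. Once the positional strategies $\sigma_{\max}$ and $\sigma_{\min}$ are fixed, the values $\maxmax$ and $\minmin$ reduce to single-player discount sum optimisation problems and are solvable by the same pseudo-polynomial techniques.

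The main obstacle is to perform the $N$-step simulation in polynomial time without traversing the $|V|^N$ leaves of the game tree. My plan is a backward dynamic programme: for each step $i$ from $N$ down to $0$ and each vertex $v$, maintain the set $W_i(v)\subseteq\mathbb{R}$ of current discounted sums ${\tt x}$ for which Eve wins from $(v,{\tt x})$ starting at step $i$, represented as a finite union of intervals. At the base case $i=N$, Lemma~\ref{lem:finite} provides $W_N(v)$ as a union of at most two intervals determined by the four threshold values and the endpoints of $I$. For $i<N$, $W_i(v)$ is obtained from the $W_{i+1}(v')$ at the successors $v'$ of $v$ by shifting each by $-\lambda^i w(v,v')$ and then taking unions (if $v\in V_\exists$) or intersections (otherwise); Eve wins iff $0\in W_0(q_0)$. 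The remaining technical difficulty is controlling the representation size across the $N$ iterations; I would argue that all endpoints arising in any $W_i(v)$ lie in a structured polynomial-size family of translates of the two endpoints of $I$ by discount-weighted partial sums along edges of $G$, so that after merging overlapping intervals each $W_i(v)$ stays of polynomial size, yielding the overall polynomial runtime.
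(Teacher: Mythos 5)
Your overall plan coincides with the paper's: rerun the $\PSPACE$ algorithm and observe that under unary encoding the horizon $N$ is small and the four values $\maxmax$, $\minmax$, $\maxmin$, $\minmin$ are computable in polynomial time from pseudo-polynomial threshold algorithms. That part is fine and matches the paper. The paper itself dismisses the $N$-step alternating phase with the single remark that $N$ becomes logarithmic; you are right that this phase needs an actual argument, since even for $N=O(\log n)$ the naive game tree has $|V|^{N}=n^{\Theta(\log n)}$ nodes, which is quasi-polynomial rather than polynomial. So your dynamic programme is aimed at a real issue that the paper glosses over.

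The gap is precisely the step you defer. Unrolling your recurrence, the endpoints of $W_i(v)$ are translates of the (constantly many) base endpoints at level $N$ by sums $\sum_j \lambda^{j} w(e_j)$ taken over \emph{paths} of length $N-i$ out of $v$, and there are up to $|E|^{N-i}$ such paths; correspondingly, at each level the number of maximal intervals of $W_i(v)$ can be multiplied by the out-degree (this happens for unions at Eve's vertices, and intersections of interval unions can likewise have as many components as the operands contribute endpoints). Nothing in your proposal rules out exponential growth in $N$, and with $N$ polynomial in $n$ --- your own bound for general unary $\lambda$ --- the claimed ``polynomial-size family'' is simply false as stated. The way to repair it is to count distinct \emph{values} rather than paths: writing $\lambda=p/q$ in lowest terms, every accumulated sum after $i\le N$ steps is a rational with denominator $q^{i-1}$ and numerator of absolute value at most $iWq^{i-1}$, so the number of reachable configurations $(v,{\tt x},i)$ is $O(|V|\cdot N^2 W q^{N})$. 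This is polynomial exactly when $q$ is bounded, i.e.\ when $\lambda$ is bounded away from $1$ --- which is also the only regime in which the paper's claim that $N=O(\log n)$ is accurate --- and then a forward memoisation over these configurations, applying the test of Lemma~\ref{lem:finite} at level $N$, yields the $\P\TIME$ bound. Either make that configuration count explicit or prove your endpoint-family bound; as written, the size control on which your whole runtime analysis rests is asserted rather than established.
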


\subsection{Multiple intervals}
The algorithm of the previous section also applies to multiple intervals \emph{as long as the gaps between the intervals are also non-singleton}.  This follows from the observation that after sufficiently many steps the overall discount payoff will not deviate too far from the current value, so at that point the game reduces to the single interval case. 
\begin{theorem}
Let $G$ be a game graph, $I$ a finite union of real intervals such that neither $I$ nor $\mathbb{R}\setminus I$ contains singleton elements, and $\lambda \in (0,1)$.  
Deciding if Eve wins the interval discount sum game $(G,I,\lambda)$ is $\PSPACE$-complete.
\end{theorem}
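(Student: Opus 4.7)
The hardness half is immediate from the previous theorem, since deciding a single non-singleton interval discount sum game is already shown to be \PSPACE-hard and is a special case of the present theorem. For the upper bound, my plan is to extend the alternating-Turing-machine algorithm of the single-interval case by choosing the number of simulation steps $N$ large enough that the suffix of any play contributes less than the finest ``feature'' of~$I$, thereby collapsing the remaining game to a single-threshold instance.

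Concretely, let $\delta > 0$ be the minimum over the lengths of the intervals forming $I$ and the lengths of the bounded maximal gaps of $\mathbb{R}\setminus I$ between consecutive intervals of $I$; the hypothesis that neither $I$ nor $\mathbb{R}\setminus I$ contains singletons guarantees $\delta > 0$, and the binary encoding of the interval bounds ensures $\log(1/\delta)$ is polynomial in the input size. I would choose
\[ N = \left\lfloor \frac{\log \delta + \log(1-\lambda) - \log(2W)}{\log \lambda}\right\rfloor, \]
so that $\lambda^{N+1}\cdot\frac{2W}{1-\lambda} < \delta$; this $N$ is polynomial in the input. The algorithm then uses an alternating Turing machine to simulate $N$ moves of the game, maintaining the current vertex $v$ and the accumulated discounted sum ${\tt x}$ (a rational of polynomially many bits, hence polynomial-space). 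After $N$ steps, any continuation contributes at most $\lambda^{N+1}\cdot\frac{W}{1-\lambda}$ in absolute value, so the final discounted sum must lie in the window $W_{\tt x}:=[{\tt x}-\lambda^{N+1}\frac{W}{1-\lambda},\ {\tt x}+\lambda^{N+1}\frac{W}{1-\lambda}]$, whose length is strictly less than~$\delta$ and hence meets at most one boundary of~$I$.

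Three cases then arise: (a) $W_{\tt x}\subseteq I$, so Eve has already won regardless of continuation; (b) $W_{\tt x}\cap I=\emptyset$, so Eve has already lost; and (c) $W_{\tt x}$ straddles a single boundary point $b$ of $I$, in which case the condition ``final discounted sum lies in~$I$'' is equivalent to a single-threshold comparison of the form $DS_\lambda(\pi_v)\sim (b-{\tt x})/\lambda^{N+1}$, with $\sim\in\{\leq,\geq\}$ determined by which side of $b$ lies in~$I$. The threshold problem is in $\NP\cap\coNP$ by the classical discount sum result, hence certainly in \PSPACE, so the whole ATM computation runs in polynomial space. The main obstacle is verifying case (c), and in particular that the no-singleton hypothesis is exactly what makes the reduction single-shot: without it, $\delta$ could be zero and the window could contain two or more boundaries, leaving a residual multi-interval subproblem with no obvious base case; with it, one simulation of $N$ polynomially-many steps collapses the problem to a single-threshold instance already known to be tractable.
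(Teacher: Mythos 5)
Your proposal is correct and follows essentially the same route as the paper: the paper's argument for the multi-interval case is exactly that one reuses the single-interval ATM simulation with $N$ chosen so that the residual contribution window $\lambda^{N+1}\cdot\frac{2W}{1-\lambda}$ is smaller than every interval of $I$ and every bounded gap of $\mathbb{R}\setminus I$, after which at most one boundary is in play and the game reduces to the single-interval (threshold) case. Your endgame phrased as a single threshold query at the unique straddled boundary is just a light repackaging of the paper's check of whether $\sigma_{\max}$ or $\sigma_{\min}$ wins via the four optimal values, so the substance is the same (only take care that the comparison at the boundary may need to be strict when the boundary point itself is not in $I$).
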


\subsection{Singleton intervals}
When the set of intervals (or their complement) include singleton intervals, the situation is more complicated.  Following the same argument as the previous section, after sufficiently many steps the problem reduces to the exact value problem: Given a game graph $G$, a discount factor $\lambda \in \mathbb{Q}$ and a target $t \in \mathbb{Q}$, does Eve have a strategy to ensure the discounted sum is exactly $t$?

It is currently open whether this problem is even decidable, however the \PSPACE-hardness result from the previous section (using the interval $\{t\}$ rather than $(t-1,t+1)$) gives a lower-bound.  The problem is related to the universality problem for discount sum automata~\cite{bo14}, a well-known problem for which decidability remains open~\cite{bh11}.  The problem was also studied for Markov Decision Processes and graphs (i.e.\ one-player games) in~\cite{cfw13} where it was shown to be decidable for discount factors of the form $\lambda = \frac{1}{n}$, and that in general infinite memory is required.

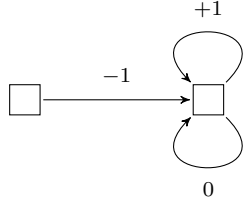
\begin{figure}
\begin{center}
\begin{tikzpicture}[inner sep=2mm, ve/.style={rectangle, draw}, va/.style={circle, draw}]
\node[ve] (A){};
\node[ve, right=of A] (B){};
\path
(A) edge node[el]{$-1$} (B)
(B) edge[loop, looseness=8,in=135,out=45] node[swap,el]{$+1$} (B)
(B) edge[loop, looseness=8,in=-135,out=-45] node[el]{$0$} (B);
\end{tikzpicture}
\caption{Exact value discount sum game ($\lambda = \frac{2}{3}$, $t=0$) that requires infinite memory (modified from~\cite{cfw13})}\label{fig:exactInfMem}
\end{center}

\end{figure}

\begin{lemma}[\cite{cfw13}]
There exist exact value discount sum games for which an infinite memory is required for a winning strategy.
\end{lemma}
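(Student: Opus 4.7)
The plan is to analyze the one-player game in Figure~\ref{fig:exactInfMem}: both vertices belong to Eve, so Adam is a spectator. The edge out of the initial vertex is forced and contributes $-1$ to the discount sum, and from the second vertex Eve must at each subsequent step pick between the $+1$ self-loop and the $0$ self-loop. Writing $w_i \in \{0,1\}$ for her choice at step $i \geq 1$ (with $w_i = 1$ meaning the $+1$ loop), the discount sum of any play is $-1 + \sum_{i=1}^\infty (2/3)^i w_i$, so Eve wins the exact-value objective ($t=0$) exactly when $\sum_{i=1}^\infty (2/3)^i w_i = 1$.

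First I would exhibit a winning infinite-memory strategy. Because the discount factor $2/3$ exceeds $1/2$, the greedy digit-assignment procedure (at each step set $w_i = 1$ iff the resulting residual target remains reachable by the tail) produces a binary sequence whose partial sums converge to any prescribed $x \in [0,2]$. This is a standard fact about $\beta$-expansions in base $\beta = 3/2$ with digits $\{0,1\}$, and since $1 \in [0,2]$ it gives Eve a (necessarily aperiodic) play of discount sum exactly $0$.

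The bulk of the argument is to rule out finite-memory strategies. Any finite-memory strategy of Eve produces, after a transient, an eventually periodic sequence $w_1 w_2 \cdots$: indeed, after the first step the play stays at $q_1$, so the controller realising the strategy must eventually revisit one of its finitely many memory states. Let $k \geq 0$ be the preperiod length and $p \geq 1$ the period length. Summing the geometric series for the periodic tail and clearing denominators, the requirement $\sum_{i=1}^\infty (2/3)^i w_i = 1$ is equivalent to the integer identity
\[ N_k\,(3^p - 2^p) + 2^k\, N_p \;=\; 3^k\,(3^p - 2^p), \]
where $N_k = \sum_{i=1}^k 2^i\,3^{k-i}\,w_i$ and $N_p = \sum_{j=1}^p 2^j\,3^{p-j}\,w_{k+j}$. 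Every summand of $N_k$ (respectively $N_p$) carries a factor $2^i$ (respectively $2^j$) with $i,j \geq 1$, so both $N_k$ and $N_p$ are even, and hence the left-hand side is even. However $3^p - 2^p$ is odd (for $p \geq 1$) and $3^k$ is odd, so the right-hand side is odd. This contradiction shows that no eventually periodic sequence realises the value $1$, so no finite-memory strategy of Eve wins.

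The main obstacle, I expect, is setting up the integer identity cleanly---tracking the rationals produced by the preperiod, the periodic tail, and the common denominator $3^k(3^p-2^p)$. Once that identity is in hand, the mod-$2$ calculation is a one-liner, and phrasing it this way has the additional virtue of treating the purely periodic case $k=0$ uniformly with $k \geq 1$.
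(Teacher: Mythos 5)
Your argument is correct, but note that the paper does not actually prove this lemma: it is stated as a citation to~\cite{cfw13}, with Figure~\ref{fig:exactInfMem} offered only as an example ``modified from~\cite{cfw13}''. What you have done is supply a self-contained proof for that very example, which is a genuine addition rather than a reproduction. Both halves of your argument check out. For existence, the play contributes $-1+\sum_{i\geq 1}(2/3)^i w_i$, the attainable set of tails is the full interval $[0,\lambda/(1-\lambda)]=[0,2]$ precisely because $\lambda=2/3\geq 1/2$ (the greedy choice keeps the residual in $[0,\lambda^{i+1}/(1-\lambda)]$), and $1$ lies in that interval. For impossibility, a finite-memory strategy in this one-player game yields an autonomous memory evolution and hence an eventually periodic choice sequence, and your integer identity
\[
N_k\,(3^p-2^p)+2^k N_p \;=\; 3^k\,(3^p-2^p)
\]
is the correct clearing of denominators; since $N_k$ and $N_p$ are even (every summand carries a factor $2^i$ with $i\geq 1$, and the empty sum for $k=0$ is even as well) while $3^k(3^p-2^p)$ is odd, the parity contradiction is immediate and does handle the purely periodic case uniformly. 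The one point worth making explicit in a polished write-up is the step ``finite memory implies eventually periodic output'': it relies on the game being one-player after the first move, so that the memory-state sequence is determined by the strategy alone; in a game where Adam had choices one would instead have to argue against each counter-strategy. Compared with invoking~\cite{cfw13}, your route buys a short, elementary, and fully verifiable proof at the cost of being specific to $\lambda=2/3$; the cited work treats the phenomenon more generally for MDPs and graphs.
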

\section{Total sum games}\label{sec:totalSum}
Total sum games refine mean-payoff games and can be seen as a special case of discount sum games where the discount factor is $1$.  
Assuming the graph has integer weights, $\underline{Total}$ will always be an integer (or $\pm \infty$), thus we can assume all intervals are closed or open as necessary.

The objective of total sum games is similar to reachability in one-dimensional vector addition systems with states~\cite{bjk10} and counter reachability games~\cite{Rei13}, however we are interested in values seen infinitely often rather than reaching a particular state and counter value.  The complexity bounds we obtain are similar to these problems, indeed we use the same problems for establishing the bounds.  However it is not clear if there is a more direct reduction between these problems.


\subsection{Lower bounds}
In this section we establish the following result:
\begin{theorem}
\begin{itemize}
\item The problem of deciding if Eve wins an interval total sum game is $\EXP$-hard.  
\item The problem of deciding if Eve wins a unary-encoded interval total sum game is $\PSPACE$-hard.
\end{itemize}
\end{theorem}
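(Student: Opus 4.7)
The plan is to establish both hardness results through reductions from one-counter reachability games (and the related reachability games on one-dimensional VASS), as studied in~\cite{Rei13,bjk10}. These games furnish the desired lower bounds: they are EXP-complete when the counter updates are encoded in binary, and PSPACE-complete when they are encoded in unary. In such a game two players alternate moving a token on a finite directed graph with integer edge weights; a counter accumulates these weights, is required to stay non-negative, and Eve's goal is to force the play to reach a designated target state $t$ with the counter equal to $0$.

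The core of the reduction takes a one-counter reachability game $G$ to an interval total sum game $(G', \{0\})$ in which the running partial sum of weights in $G'$ mirrors the counter value in $G$. The graph $G'$ is obtained from $G$ by attaching at the target $t$ an Eve-controlled edge (of weight $0$) to a fresh sink vertex $s$ carrying only a zero-weight self-loop; moving to $s$ freezes the total sum at its current value, so the play satisfies $\underline{Total}=0$ iff Eve reached $t$ with counter $0$. The forward direction is then straightforward: any winning Eve-strategy in $G$ lifts to one in $G'$ by simulating the counter game and moving to $s$ upon arriving at $t$ with counter $0$.

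The main obstacle is to ensure the converse: that Eve wins $(G', \{0\})$ only if she can force the play to $(t, 0)$ in $G$. Two pathologies of the straight simulation must be ruled out. First, the partial sum in $G'$ can transiently become negative without penalty, whereas in $G$ a negative counter is an immediate loss for Eve. Second, Eve might achieve $\underline{Total}=0$ by cycling forever in $G$ without ever visiting the sink $s$. To counter both pathologies I would augment every Adam vertex of $G'$ with a verification gadget: an optional Adam-controlled move into a small subgraph that forces the total sum to diverge to $\pm \infty$ exactly when the current partial sum is nonzero. The gadget is designed so that Adam profits from entering it if and only if the current partial sum is not $0$, making his optimal strategy enter the gadget precisely when the simulated counter has strayed from the ``on-track'' value. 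The faithful simulation of $G$ in $G'$ then hinges on showing that this gadget, combined with Adam's knowledge of the play history (and hence the current partial sum), lets him mimic his optimal strategy in $G$.

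The hardest and most delicate step will be designing the verification gadget so that it correctly distinguishes zero from nonzero partial sums without interfering with Eve's genuine winning strategies. A plausible design is a small subgraph in which Eve's only winning option is a freeze edge into a zero-weight sink, winning exactly when the current partial sum is $0$; all of her other options in the gadget lead to divergent self-loops of weight $+1$ or $-1$, producing $\underline{Total}=\pm\infty$ and thus losing. Once the gadget is in place and correct, the reduction is polynomial-time computable and the new weights and interval endpoints introduced are of magnitude $O(W)$, preserving both binary and unary encodings. The EXP-hardness and PSPACE-hardness of the underlying one-counter games therefore transfer to the binary- and unary-encoded interval total sum problems, respectively, yielding both claims of the theorem.
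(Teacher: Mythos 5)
Your proposal has a genuine gap, and it lies exactly where you predicted: the verification gadget. As described, the gadget lets Adam move, at any of his vertices, into a subgraph where Eve wins if and only if the current partial sum is $0$ (freeze into a zero-weight sink) and loses otherwise (divergent loops). But in any faithful simulation of a counter game the running partial sum \emph{equals} the counter value, and the counter is nonzero at almost every intermediate configuration of essentially every play. Adam would therefore enter the gadget at his first opportunity and win, defeating even Eve's genuine winning strategies; the forward direction of your reduction collapses. The condition you need Adam to test is not ``the sum is nonzero'' but ``the sum has strayed from the value it ought to have'', and since the sum is the simulated counter there is nothing to stray from: the two conditions you conflate (``partial sum $\neq 0$'' versus ``Eve has cheated'') are entirely different. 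Moreover, even a corrected sign-check gadget (say, an Eve vertex with a $-1$ self-loop and a $0$-edge to a zero-weight sink, from which Eve wins iff the current sum is non-negative) only handles your first pathology; your second pathology --- Eve securing $\underline{Total}=0$ by cycling through counter value $0$ at non-target states forever --- cannot be handled by any Adam-triggered test, because at the moments that matter the sum genuinely is $0$.

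The paper sidesteps all of this by choosing source problems whose counters are \emph{monotone}. For the binary case it reduces from countdown games (all weights negative, $\EXP$-complete): a play that never freezes has strictly decreasing partial sums and hence $\underline{Total}=-\infty$, and a sum that ever goes negative can never return to $0$, so both of your pathologies vanish and the reduction is simply ``add a zero-weight sink reachable from Eve's vertices'' with target interval $\{0\}$. For the unary case it reduces from emptiness of one-letter alternating automata ($\PSPACE$-complete), viewed as a countdown game with all weights $-1$ plus an initial $+1$ guessing loop. To keep your more general source problem you would have to either restrict to its monotone fragment (rediscovering countdown games) or design a substantially more involved simulation, for instance one that offsets the running sum so that legitimate intermediate configurations never sit at the target value; neither appears in your write-up. (A minor further point: binary one-counter reachability games are not known to be $\EXP$-complete; only $\EXP$-hardness is needed here, and that much does hold since they generalize countdown games.)
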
 

\subsubsection{Binary encoding}
We first show that deciding the winner of interval total sum games is $\EXP$-hard by reducing from \emph{countdown games}.  A countdown game is played on a weighted graph, where all weights are negative.  The play starts by setting a counter to a given initial value.  Whenever an edge is taken the counter is decremented by the weight.  Eve wins if and only if she reaches a vertex with the counter exactly $0$.  Deciding the winner of countdown games is known to be $\EXP$-complete~\cite{jsl08}.  Note that by subdividing edges if necessary we can assume that the players play alternately, that is the graph is bipartite.  The reduction is straightforward, given a countdown game $G$ with initial credit $c$ we construct the following total sum game.  We add two new vertices (of Eve) $v_I$ and $v_\bot$.  There is an edge from $v_I$ to the initial vertex of $G$ with weight $c$, and an edge of weight $0$ from every vertex of Eve in $G$ to $v_\bot$.  Also, for every edge $e=(v,v')$ where $v$ is a vertex of Eve we add another edge $(v,v_\bot)$ of weight $w(e)$.  Finally we have an edge $(v_\bot,v_\bot)$ of weight $0$.  Clearly Eve can ensure $\underline{Total} = 0$ if and only if she can reach $v_\bot$ with a total sum of $0$.  Thus she can win the interval total sum game, with interval $\{0\}$, if and only if she can win the countdown game.

\subsubsection{Unary encoding}
For unary-encoded interval total sum games, we reduce from the non-emptiness problem for one letter alphabet alternating automata, shown to be \PSPACE-complete in~\cite{Hol95}.  Again, the reduction is simple as this problem can be viewed as a countdown game where all edges have weight $-1$ and Eve has to guess the initial credit.  The guessing stage can be implemented by having a loop on $v_I$ with weight $+1$.  The remainder of the reduction is as in the reduction from countdown games.
\subsection{Upper bound}
We now show that interval total sum games can be solved in $\EXPSPACE$ by reducing them to parity games on infinite graphs described by one-counter machines.  
Such games were studied in~\cite{Ser06} where determining the winner was shown to be decidable in $\PSPACE$, but the graphs were described by a \emph{unary} counter machine, or equivalently, pushdown graphs with a single-letter alphabet.
Here we use a definition corresponding to the use of a binary-valued counter (also called long-range in~\cite{Rei13}).  More formally, a one-counter game graph is described by a tuple $(V,V_\exists,E,E_0,w,q_0)$ where $(V,E,w)$ is a finite weighted graph, $V_\exists \subseteq V$, $E_0 \subseteq V \times V$ and $q_0 \in V$.  The (infinite) unweighted game graph corresponding to such a tuple is $(V \times \mathbb{Z}, V_\exists \times \mathbb{Z}, E', (q_0,0))$ where $E'$ is defined as follows:
\begin{itemize}
	\item If $e = (v,v') \in E$ then for all $c \in \mathbb{Z}$, $\big((v,c),(v',c+w(e))\big) \in E'$, and
	\item If $(v,v') \in E_0$ then $\big((v,0),(v',0)\big) \in E'$.
\end{itemize} 
Intuitively a one-counter game graph is a game graph augmented with a counter which is incremented or decremented by weights on traversed edges.  A special set of edges, $E_0$, are activated only if the counter has value $0$.
It is clear a binary one-counter graph can be described by an exponentially larger unary one-counter graph\footnote{We allow negative counter values, but this can be handled with non-negative counter values by doubling the state space}, hence our reduction yields an $\EXPSPACE$ algorithm.

The key observation for the reduction is that interval total sum games can be viewed as parity games on $V \times \mathbb{Z}$, where the second component keeps track of the total sum seen so far.  The priority of a vertex $(v,c)$ is determined by which interval (or gap between intervals) contains $c$, in the same manner used in the equivalence between liminf games and parity games in Section~\ref{sec:liminf}.  However, we cannot use the result on parity games on one-counter graphs directly for this observation because for those games the priorities are defined by the states of the counter-machine and not the values of the counter.  Instead, we have Eve assert which interval (or gap between intervals) the counter is in, and give Adam the ability to punish her if she claims falsely.

Let $(V,V_\exists,E,w,q_0,I)$ be an interval total sum game.  Recall from Section~\ref{sec:liminf} the definition of $\Omega_I$.  Let us define $m_i := \min \Omega_I^{-1}(i)$ and $M_i := \max \Omega_I^{-1}(i)$.  We construct a parity game on a one-counter graph $(V',V'_\exists,E',E'_0,w',q_0',\Omega)$ as follows.
\begin{itemize}
\item $V' = (V \times \{0,1\}) \times [1,2r+1] \cup \{v_e \st e \in E\} \cup \{v_0,v_\bot,v_\top\}$;
\item $V'_\exists = E \cup \{v_0,v_\bot,v_\top\} \cup \{(v,1,i) \st v \in V_\exists\text{ and }i \in [1,2r+1]\}$;
\item $q_0' = (q_0,1,\Omega_I(0))$;
\item $E_0' = \{(v_\bot,v_0), (v_\top,v_0)\}$;
\item $E'$ and $w'$ given as follows, for all $i \in [1,2r+1]$:
\begin{itemize}
\item For every $e = (v,v') \in E$, an edge from $(v,1,i)$ to $v_e$ with weight $w(e)$ and an edge from $v_e$ to $(v',0,i)$ with weight $0$,
\item An edge from $(v,0,i)$ to $v_\bot$ with weight $-m_i$ if $m_i > -\infty$,
\item An edge from $(v,0,i)$ to $v_\top$ with weight $-M_i$ if $M_i < \infty$,
\item An edge from $(v,0,i)$ to $(v,1,i)$ with weight $0$, and
\item Loops on $v_\bot$, $v_\top$ and $v_0$ with weights $-1$, $+1$ and $0$ respectively.
\end{itemize}
\item $\Omega((v,0,i)) = \Omega((v,1,i)) = \Omega_I(i)$, $\Omega(v_e) = \Omega(v_\bot) = \Omega(v_\top) = 2r+1$, and $\Omega(v_0) = 2r$.
\end{itemize}	
Intuitively, we create $2r+1$ copies of the game graph (one for each interval and one for each gap), but replace edges with the edge gadget shown in Figure~\ref{fig:parityCounter}.
\begin{figure}
\begin{center}
\begin{tikzpicture}[inner sep=2mm, ve/.style={rectangle, draw}, va/.style={circle, draw}, node distance=1cm]
\node[ellipse,draw](A){$v,1,i$};
\node[ve,right=1.5cm of A](e){$v_e$};
\node[ellipse,draw,right=1.5cm of e](b){$v',0,i'$};
\node[ve,right=of b](B){$v',1,i'$};
\node[ve,below=0.7cm of b, xshift=-1.5cm](bot){$v_\bot$};
\node[ve,below=0.7cm of b, xshift=1.5cm](top){$v_\top$};
\node[ve,below=0.7cm of bot, xshift=1.5cm](z){$v_0$};

\node[left=0.5cm of e,yshift=0.5cm](d1){};
\node[left=0.5cm of e,yshift=-0.5cm](d2){}; 
\node[right=0.5cm of e,yshift=-0.5cm](d3){};
\node[right=0.5cm of e,yshift=0.5cm](d4){}; 

\path
(A) edge node[el,pos=0.3]{$w(e)$} (e) 
(e) edge node[el,pos=0.7]{$0$} (b) 
(b) edge node[el]{$0$} (B)

(b) edge node[el,swap,anchor=east]{$-m_{i'}$} (bot)
(b) edge node[el,anchor=west]{$-M_{i'}$} (top)
(bot) edge[dashed] node[el,swap,anchor=east]{$=0?$} (z)
(top) edge[dashed] node[el,anchor=west]{$=0?$} (z)

(d2) edge[dotted] (e)
(e) edge[dotted] (d3)
(d1) edge[dotted] (e)
(e) edge[dotted] (d4)

(bot) edge[loop,looseness=4,out=-135,in=135] node[el]{$-1$} (bot)
(top) edge[loop,looseness=4,out=-45,in=45] node[el,swap]{$+1$} (top)
(z) edge[loop,looseness=4,out=-45,in=-135] node[el]{$0$} (z)
;
\end{tikzpicture}
\caption{Edge gadget for edge $e = (v,v')$,  $v \notin V_\exists$, $v' \in V_\exists$}\label{fig:parityCounter}
\end{center}
\end{figure}
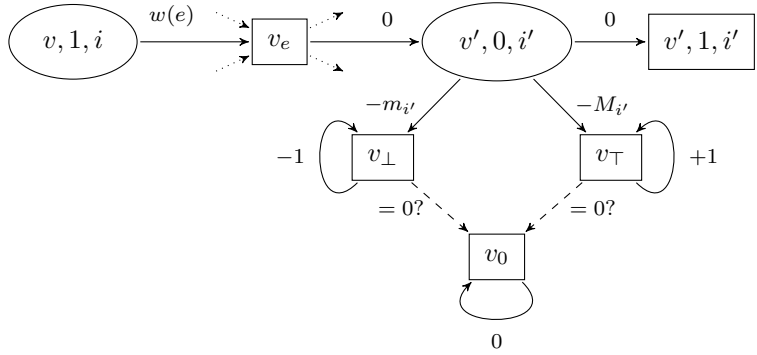

We now show that Eve has a winning strategy in this parity game if and only if she has a winning strategy in the interval total sum game.  We first observe that if $v_\bot$ ($v_\top$) is reached with a negative (positive) counter value then the edge to $v_0$ is never activated so the vertex acts as a sink which is winning for Adam.  Conversely, if $v_\bot$ ($v_\top$) is reached with a non-negative (non-positive) counter value then the loop decrements (increments) the counter until the edge to $v_0$ is activated, whereupon Eve can win by moving to this sink which is winning for her.  It follows that if the play reaches a vertex $(v,0,i)$ and the counter value is outside $[m_i,M_i]$ then Adam can win by playing to $v_\bot$ if the counter is $<m_i$ or to $v_\top$ if the counter is $>M_i$.  On the other hand, if the counter is in the range $[m_i,M_i]$ then Eve wins if Adam plays to either of these vertices.  Thus the gadget defined by the vertices $\{v_\bot,v_\top,v_0\}$ allows Adam to punish Eve if the counter is not in the asserted interval and lets Eve win if Adam attempts to falsely punish her.  Now, assuming Eve plays correctly, it is easy to see that the minimal priority seen infinitely often corresponds to the lowest interval or interval gap visited infinitely often by the counter. Thus Eve has a winning strategy in the parity game if and only if she has a winning strategy in the interval game.

\begin{theorem}
Deciding if Eve wins an interval total sum game is in \EXPSPACE.
\end{theorem}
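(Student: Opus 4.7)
My plan is to finish the correctness verification of the reduction to parity games on one-counter graphs sketched just before the theorem, and then to invoke the \PSPACE\ algorithm of~\cite{Ser06} for parity games on \emph{unary} one-counter graphs, paying an exponential blow-up to convert the binary counter into a unary one. The correctness argument splits into two pieces: showing that the side-gadget $\{v_\bot,v_\top,v_0\}$ acts as a truth-enforcement mechanism for Eve's assertion about which interval or gap $\Omega_I^{-1}(i)$ the current total sum lies in, and showing that, assuming both players behave optimally with respect to this mechanism, the parity game faithfully mirrors the interval total sum game.

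First I would formalise the gadget behaviour at a state $(v,0,i)$, where Adam's two moves subtract $m_i$ (via $v_\bot$) or $M_i$ (via $v_\top$) from the counter. I would verify: (a) if Eve's assertion was truthful, i.e.\ $c \in [m_i,M_i]$, then in either challenge the counter lands with the correct sign for the $\pm1$ self-loop to cancel it to $0$ in finitely many rounds, after which the $E_0$-edge to $v_0$ activates and Eve secures priority $2r$ forever; and (b) if Eve lied, Adam picks $v_\bot$ when $c<m_i$ and $v_\top$ when $c>M_i$, leaving the counter with the wrong sign, so the self-loop never returns it to $0$, the $E_0$-edge never fires, and the play stays at priority $2r+1$ forever. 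Consequently, on any optimal play Eve always asserts the true index and Adam never triggers the gadget. Under such honest play there is an obvious bijection between plays of the reduced parity game and plays of $(G,I)$; the transient edge vertices $v_e$ each have a unique outgoing edge, so their priority $2r+1$ can never be the liminf. Thus the minimum priority visited infinitely often equals $\min\{\Omega_I(c) : c \text{ is attained infinitely often by the counter}\}$, which is even exactly when $\underline{Total} \in I$, giving the iff between winners.

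For the complexity step I would translate the resulting binary one-counter game into an equivalent unary one-counter game by subdividing each weighted edge of weight $w$ into $|w|$ unit-weight edges, doubling the control-state space so that negative increments can be implemented with a non-negative counter as per the earlier footnote. The size of the new graph is $O((|V|+|E|)\cdot W)$, i.e.\ exponential in the binary encoding of the original input, while the priority set stays polynomial. Applying the \PSPACE\ algorithm of~\cite{Ser06} on this expanded graph uses polynomial space in its size, hence exponential space in the original input, yielding the claimed \EXPSPACE\ bound. The main obstacle I expect to have to dispatch carefully is the bijection step: one must rule out that either player exploits the transient or gadget vertices to distort which priorities occur in the liminf — concretely, that priority $2r+1$ (attached to $v_e$, $v_\bot$, $v_\top$) only becomes dominant when Adam is genuinely winning. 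This rests on $v_e$ having a unique successor and on the gadget self-loops being strictly monotone in the counter, so that either the $E_0$-test fires in finitely many steps or never fires at all, with outcomes matching the truthfulness of Eve's assertion.
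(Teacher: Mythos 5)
Your proposal is correct and follows essentially the same route as the paper: verify that the $\{v_\bot,v_\top,v_0\}$ gadget lets Adam punish exactly the false assertions and lets Eve survive exactly the true ones, conclude that under honest play the minimal priority seen infinitely often is even iff $\underline{Total}\in I$, and then expand the binary one-counter graph into an exponentially larger unary one so that the \PSPACE\ algorithm of~\cite{Ser06} yields an \EXPSPACE\ bound. One small slip that does not affect the argument: the vertices $v_e$ do \emph{not} have a unique outgoing edge --- they are precisely where Eve chooses the index $i$ (her assertion), with one edge to $(v',0,i)$ for each $i$ --- but your conclusion that their priority $2r+1$ never determines the liminf still holds, since it is the maximal priority and some vertex of the form $(v,\cdot,i)$ is visited infinitely often on every play that is not trapped in the punishment gadget.
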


We conclude by observing that if the interval game is encoded in unary, then the above reduction is a polynomial time reduction to the parity games on one-counter graphs considered in~\cite{Ser06}, giving an upper bound to match our lower bound.
\begin{theorem}
Deciding if Eve wins a unary encoded interval total sum game is \PSPACE-complete.
\end{theorem}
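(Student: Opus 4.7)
The plan is simply to package what has already been established. The \PSPACE\ lower bound was shown in the Lower bounds subsection by reduction from non-emptiness of one-letter alphabet alternating automata (\PSPACE-complete by Holzer), so only the matching upper bound remains.

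For the upper bound I would reuse the reduction used to prove the \EXPSPACE\ bound of the previous theorem. Given a unary-encoded interval total sum game, that construction produces a parity game on a binary one-counter graph whose number of control states and priorities is polynomial in the input size, and whose counter updates (both the original edge weights $w(e)$ and the gadget weights $-m_i$, $-M_i$ of Figure~\ref{fig:parityCounter}) are all polynomially bounded. I would therefore unfold every counter update of magnitude $k$ into a chain of $k$ edges each of weight $\pm 1$, assigning the intermediate states to the same player as the original edge and the dominating odd priority $2r+1$, so that they cannot influence the parity condition. The zero-test edges in $E_0'$ remain unchanged. The resulting structure is a parity game on a \emph{unary} one-counter graph (equivalently, a pushdown graph with a one-letter stack alphabet) of size polynomial in the unary input.

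Finally, I invoke Serre's theorem~\cite{Ser06} that parity games on such unary one-counter graphs are decidable in \PSPACE, which yields the desired upper bound. The only point requiring care is that the unfolding is faithful: each chain state has a single outgoing edge and carries the dominating odd priority, so neither player can profitably deviate inside a chain and the minimal priority seen infinitely often is still determined by the original gadget states. Once this routine check is completed, the theorem follows.
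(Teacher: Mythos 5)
Your proposal is correct and follows essentially the same route as the paper: the lower bound is inherited from the reduction from one-letter alternating automata non-emptiness, and the upper bound reuses the one-counter parity game construction, observing that with unary weights it becomes a polynomial-size instance of the unary one-counter (single-letter pushdown) parity games solvable in \PSPACE\ by Serre~\cite{Ser06}. Your explicit unfolding of each polynomially bounded counter update into a chain of unit-weight edges with the dominating odd priority is just a spelled-out version of the conversion the paper treats as immediate, and your faithfulness check is sound.
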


\subsection{Memory requirements}
We now consider memory requirements for winning strategies in interval total sum games.  We show that, in general, infinite memory is required for winning strategies, but for single interval games, winning strategies for Eve need only finite memory.

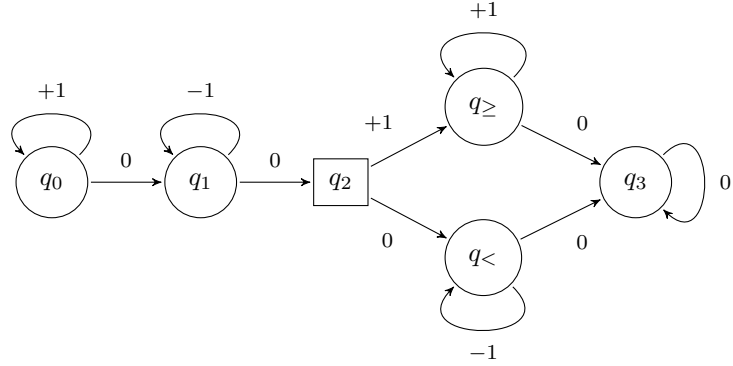
\begin{figure}
\begin{center}
\begin{tikzpicture}[inner sep=2mm, ve/.style={rectangle, draw}, va/.style={circle, draw}, node distance=1cm]
\node[va] (0){$q_0$};
\node[va, right=of 0] (A){$q_1$};
\node[ve, right=of A] (B){$q_2$};
\node[va, right=of B, yshift=1cm] (C){$q_{\geq}$};
\node[va, right=of B, yshift=-1cm] (D){$q_<$};
\node[va, right=of C, yshift=-1cm] (E){$q_3$};

\path
(0) edge[loop,looseness=4, out=45, in=135] node[el,swap]{$+1$} (0)
(0) edge node[el]{$0$} (A)
(A) edge[loop,looseness=4, out=45, in=135] node[el,swap]{$-1$} (A)
(A) edge node[el]{$0$} (B)
(B) edge node[el]{$+1$} (C)
(B) edge node[el,swap]{$0$} (D)
(C) edge[loop,looseness=4, out=45, in=135] node[el,swap]{$+1$} (C)
(D) edge[loop,looseness=4, out=-45, in=-135] node[el]{$-1$} (D)
(C) edge node[el]{$0$} (E)
(D) edge node[el,swap]{$0$} (E)
(E) edge[loop,looseness=4, out=45, in=-45] node[el]{$0$} (E)
;
\end{tikzpicture}

\end{center}
\caption{Interval total sum game ($I = \mathbb{R} \setminus \{0\}$) which requires infinite memory}\label{fig:TSInf}
\end{figure}

\begin{lemma}
Finite memory winning strategies are not sufficient in interval total sum games.
\end{lemma}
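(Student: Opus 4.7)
The plan is to verify the two claims about the example in Figure~\ref{fig:TSInf} with $I = \mathbb{R} \setminus \{0\}$: Eve wins, but every winning strategy requires infinite memory. Throughout I adopt the natural convention that $\underline{Total} = \pm\infty$ counts as lying in any unbounded component of $I$ extending to that infinity; without this convention Adam wins trivially by looping forever at $q_0$ and the example would be vacuous, so I would state this convention explicitly before starting.

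I would first describe the structure of consistent plays. Every vertex except $q_2$ belongs to Adam, so a play is parameterised by $(n_0, n_1) \in (\mathbb{N} \cup \{\infty\})^2$ (the number of self-loops Adam makes at $q_0$ and then at $q_1$), the single choice made by Eve at $q_2$ if both are finite, and a final count of self-loops at $q_\geq$ or $q_<$ before reaching the $0$-sink $q_3$. Any play containing an infinite loop diverges to $\pm\infty$ and is winning for Eve. For plays that reach $q_3$, set $s := n_0 - n_1$; a direct case check shows that Adam can force the final total to equal $0$ from $q_\geq$ (where the new running sum is $s+1$, to which Adam may add any $k \in \mathbb{N}$) iff $s \leq -1$, and from $q_<$ (where the sum stays at $s$ and Adam may subtract any $k$) iff $s \geq 0$. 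Thus Eve wins iff at $q_2$ she picks $q_\geq$ when $s \geq 0$ and $q_<$ when $s \leq -1$, which she realises by the (necessarily infinite-memory) strategy that explicitly counts $n_0$ and $n_1$.

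For the lower bound I would argue by contradiction: assume Eve has a winning strategy $\sigma$ with $M$ memory states. Applying the pigeonhole principle to the $M+1$ memory states $\mu_0, \mu_1, \ldots, \mu_M$ reached after $0, 1, \ldots, M$ consecutive self-loops at $q_0$, pick $0 \leq a < b \leq M$ with $\mu_a = \mu_b$. Determinism of the memory update then forces the memory state reached at $q_2$ after the prefix ``loop $a$ times at $q_0$, move to $q_1$, loop $n_1$ times at $q_1$, move to $q_2$'' to coincide with the state reached by the analogous prefix with $b$ in place of $a$, for every $n_1$. Setting $n_1 := a+1$, the first paragraph's analysis says the correct move is $q_<$ on the first prefix (there $s = -1$) but $q_\geq$ on the second ($s = b - a - 1 \geq 0$). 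Whichever of the two $\sigma$ outputs at this shared memory state, Adam plays the prefix on which that choice is losing and then completes the play as in the previous paragraph to drive the total to exactly $0 \notin I$, contradicting the winning-ness of $\sigma$.

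The only genuinely delicate ingredient is the convention on divergent plays; once it is fixed, the proof reduces to the case analysis at $q_2$ combined with a standard pigeonhole-on-memory argument, neither of which I expect to pose any real obstacle. If anything, the fiddly step is making sure the divergent branches of Adam's options are treated consistently with the convention so that they do not accidentally become winning for Adam.
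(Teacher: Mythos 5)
Your proof is correct and follows essentially the same route as the paper: the same example, the same pigeonhole argument on memory states at $q_0$ propagated to indistinguishable sums of different signs at $q_2$, and the same completion by Adam driving the total to $0$. Your explicit treatment of the $\pm\infty$ convention and the precise case analysis at $q_\geq$/$q_<$ are welcome details that the paper leaves implicit, but they do not change the argument.
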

\begin{proof}
Consider the game in Figure~\ref{fig:TSInf} with the intervals $(-\infty,0) \cup (0, \infty)$.  Eve has a winning strategy in this game: if the play ever reaches $q_2$ then she moves to $q_{\geq}$ if the total sum is non-negative, and moves to $q_{<}$ otherwise.  Clearly any play consistent with this strategy will have $\underline{Total} \neq 0$ so it is winning for Eve.  Now suppose Eve plays a finite memory strategy.  It follows there exists a memory state which cannot distinguish between two distinct sums at $q_0$.  Therefore, after taking sufficiently many loops at $q_1$, it follows that there exists a memory state which cannot distinguish between two sums of different signs at $q_2$.  As Eve's play depends only on her location and memory state, there is a total value for which Eve makes the ``wrong choice'', i.e.\ she either moves to $q_{\geq}$ with a negative sum or to $q_<$ with a non-negative sum.  Adam's winning strategy is then to play to this move of Eve and then to increase or decrease the total sum to $0$ before moving to $q_3$.
\end{proof}

By exchanging the roles of the players and complementing the interval, we see that even for single interval games Adam may require infinite memory.  We now show this is not the case for Eve.  In fact, we show that having unbounded intervals is necessary for Eve to not have a finite memory winning strategy.

\begin{lemma}
Let $(G,I)$ be an interval total sum game where $I \cap \mathbb{Z}$ is finite.  If Eve has a winning strategy then she has a finite memory winning strategy.
\end{lemma}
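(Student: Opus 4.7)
Plan. My plan is to reduce $(G,I)$ to a finite parity game via counter truncation and invoke positional determinacy of parity games. Since $I\cap\mathbb Z$ is finite, let $N=\min(I\cap\mathbb Z)$ and $M=\max(I\cap\mathbb Z)$ (well defined, else Eve cannot win and the lemma is vacuous). The key observation is that the priority function $\Omega_I$ from the one-counter parity reduction of this section is constantly $1$ for counter values below $\inf I_1$ and constantly $2r+1$ for values above $\sup I_r$, so only in a bounded window does $\Omega_I$ genuinely depend on the counter. Let $W$ be the maximum absolute edge weight and fix a bound $B=\max(|N|,|M|)+K|V|W$ for a constant $K$ to be chosen below. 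Form the finite parity game $G_B$ on state set $V\times\{-B,\dots,B\}\cup\{\top,\bot\}$, where $\top,\bot$ are absorbing sinks of priorities $2r+1$ and $1$ respectively entered whenever a transition would drive the counter outside $[-B,B]$; priorities of $(v,c)$ are $\Omega_I(c)$, and edges are inherited from $G$. By positional determinacy of parity games~\cite{Zie98}, a positional winning strategy of Eve on $G_B$ pulls back to a finite-memory winning strategy of memory $O(|V|B)$ in $(G,I)$: consistent plays stay in $[-B,B]$ and, by construction of $\Omega_I$, satisfy $\underline{Total}\in I$.

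The heart of the argument is to show that \emph{if Eve has any winning strategy $\sigma^*$ in $(G,I)$, then she wins $G_B$ for a suitable choice of $K$.} Since $\sigma^*$ wins, Adam cannot force $c_k\to+\infty$ (else $\liminf=+\infty\notin I$), so from every reachable state $(v,c)$ with $c>M$ Eve wins the reachability subgame whose target is $\{c'\le M\}$. Reachability games are positionally determined, and by shift-invariance of edge weights above $M$ I would extract a positional ``bring-down'' sub-strategy $\sigma_\downarrow$ depending only on the vertex: each map $c\mapsto \sigma_\downarrow(v,c)$ takes values in the finite successor set of $v$ and, by pigeonhole and the shift-invariant dynamics above $M$, stabilises to a single choice $\sigma_\downarrow^*(v)$ winning for all sufficiently large $c$. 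A cycle-decomposition pumping argument in the finite graph $G$ then shows that every cycle consistent with $\sigma_\downarrow^*$ has strictly negative weight (otherwise Adam would force the counter to stay above $M$, contradicting the reachability win), so $\sigma_\downarrow^*$ drops the counter by at least $1$ in at most $|V|$ steps with transient upward excursion at most $|V|W$. A symmetric construction yields a ``bring-up'' strategy $\sigma_\uparrow^*$ in the low regime. Splicing $\sigma_\downarrow^*,\sigma_\uparrow^*$ with $\sigma^*$ inside the band $[N,M]$ (reverting to $\sigma^*$ each time the counter re-enters this band) produces a winning strategy for $(G,I)$ whose plays never leave $[-B,B]$ provided $K\ge 2$; this is a winning strategy in $G_B$.

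The main obstacle is the translation-invariant extraction of $\sigma_\downarrow^*$ and $\sigma_\uparrow^*$ together with the overshoot bound. Translation invariance of a positional reachability winner relies on the shift-invariance of the high-regime dynamics combined with the hypothesis, derived from $\sigma^*$ winning, that Eve wins the reachability subgame uniformly for all large counter values; the strict negativity of cycles consistent with $\sigma_\downarrow^*$ and the $|V|W$ overshoot bound follow from a standard pumping argument in the finite graph $G$, exploiting that any consistent path of length greater than $|V|$ decomposes into a simple path plus simple cycles that can be rerouted or pruned. Once these ingredients are in hand, the construction of $G_B$ and Zielonka's theorem together deliver the claimed finite-memory winning strategy in $(G,I)$.
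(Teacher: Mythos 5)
Your high-level target --- show that the counter can be confined to a bounded window and then invoke positional determinacy of a finite parity game --- is reasonable, but the argument you give for the crucial step (that Eve wins the truncated game $G_B$) has a genuine gap. Your spliced strategy replaces Eve's behaviour outside the band $[N,M]$ by independently constructed ``drain'' strategies $\sigma_\downarrow^*,\sigma_\uparrow^*$ whose only guarantee is that the counter returns to the band. Nothing ensures that the state $(v',c')$ at which the play re-enters the band is one from which Eve can still win the interval objective: the all-cycles-negative positional strategy may funnel the play into a vertex that is losing for her, and ``reverting to $\sigma^*$'' is not even defined there since $\sigma^*$ is a strategy from the initial position with a different history. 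Moreover, even if every re-entry landed in her winning region, a play that leaves and re-enters the band infinitely often switches strategies infinitely often, and each in-band winning strategy is interrupted before it can pin $\liminf_k w(\pi[..k])$ into $I$; when $I$ has gaps (e.g.\ $I=\{0\}\cup\{5\}$) the resulting liminf can land in a gap. This is precisely the difficulty the paper's mean-payoff section has to fight explicitly, and your proposal does not address it. A secondary problem is the claimed bound $B=\max(|N|,|M|)+K|V|W$ for a constant $K$: the necessary excursion height of a winning strategy is finite but is not obviously $O(|V|W)$, and the pigeonhole ``stabilisation'' of the positional reachability strategy to a vertex-only strategy is asserted rather than proved.

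The paper's proof avoids all of this by never leaving the original winning strategy. It views the game as a parity game on the infinite arena $V\times\mathbb{Z}$, takes a \emph{positional} winning strategy $\sigma$ there (positional in vertex \emph{and} counter value, by Zielonka), and uses $\sigma$ itself for the out-of-band excursions. K\"onig's lemma shows that the tree of $\sigma$-consistent plays from an out-of-band state, cut at the first return to $\overline{I}=[\inf I,\sup I]$, is finite --- an infinite branch would be a $\sigma$-consistent play that never revisits $\overline{I}$ and hence sees no even priority, contradicting that $\sigma$ wins. That finite tree is the memory for the excursion; since every play of the resulting finite-memory strategy is consistent with $\sigma$, it is automatically winning and returns only to $\sigma$-winning in-band states. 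If you want to keep your truncation framework, you should derive the bound $B$ from the depth of these K\"onig trees and use $\sigma$ (not a separate drain strategy) to define Eve's moves in $G_B$.
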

\begin{proof}
As observed in the previous section, we can regard an interval total sum game as a parity game on $V \times \mathbb{Z}$.  It is well known~\cite{Zie98} that positional strategies suffice in parity games, even on infinite graphs.  However, in our case such a strategy would depend on the current state \emph{and on the counter value}, so it would not immediately be realizable with finite memory.  We now show that if $I \cap \mathbb{Z}$ is finite and Eve has a winning strategy then we only need to consider a bounded set of counter values so we can realize the strategy with finite memory.  Let $\sigma$ be a positional winning strategy for Eve on $V \times \mathbb{Z}$, and let $\overline{I} = [\inf I, \sup I]$.  If $\sigma$ is winning from $(v,c)$ where $c \notin \overline{I}$ we claim she only requires finite memory to reach a state $(v',c')$ from which $\sigma$ is winning and where $c' \in \overline{I}$.  Consider the finitely-branching, infinite tree of plays consistent with $\sigma$ from $(v,c)$\footnote{That is, the tree rooted at $(v,c)$ where the branches are all the plays consistent with $\sigma$ and a branching occurs when Adam has a choice of moves}.  Let us cut a branch when it first reaches a vertex $(v',c')$ with $c' \in \overline{I}$.  Note that as we are following plays consistent with $\sigma$, such a state is in the winning set of $\sigma$.  We claim the resulting tree is finite.  If it were not, then by K\"onig's lemma there exists an infinite branch, that is, an infinite play consistent with $\sigma$ that does not reach a vertex $(v',c')$ with $c' \in \overline{I}$.  As all even priority states are only of the form $(v',c')$ where $c \in I$, such a play is winning for Adam, contradicting the fact that $\sigma$ is a winning strategy for Eve.  This finite tree then serves as the memory states for the strategy to reach $\overline{I}$ from $(v,c)$.  The finite memory strategy is now clear: if the current state is $(v,c)$ with $c \in \overline{I}$ she moves to $\sigma(v,c)$.  If the play ever reaches a state $(v',c')$ with $c' \notin \overline{I}$ she plays her finite memory strategy until the play returns to $(v'',c'')$ with $c'' \in \overline{I}$.  As $\sigma$ is positional, there are at most $|V| \times |\overline{I}|$ of these ``out-of-bounds'' states reachable (and possibly the initial state $(q_0,0)$) so overall we only require finite memory.
\end{proof}

To complete the argument for single interval total sum games, we observe that if the interval is infinite then we are considering the classical threshold problem for total sum games.  Positional strategies for these games were shown to be sufficient in~\cite{gz04}.

\begin{theorem}
Let $(G,I)$ be a single interval total sum game.  If Eve has a winning strategy then she has a finite memory winning strategy.
\end{theorem}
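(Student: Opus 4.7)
The plan is to dispatch the theorem by a clean case split on whether the single interval $I$ is bounded.

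If $I$ is bounded then $I \cap \mathbb{Z}$ is finite, and the preceding lemma immediately produces a finite memory winning strategy for Eve. This covers every closed, open, or half-open bounded interval in one step, without any additional work.

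If $I$ is unbounded, then since $I$ is a single interval it has one of the forms $\mathbb{R}$, $[a,\infty)$, $(a,\infty)$, $(-\infty,b]$, or $(-\infty,b)$. The case $I = \mathbb{R}$ is trivial. Each of the four remaining forms encodes a classical $\underline{Total}$ threshold objective: for instance $I = [a,\infty)$ asks Eve to ensure $\underline{Total}(\pi) \geq a$, while $I = (-\infty,b]$ asks her to ensure $\underline{Total}(\pi) \leq b$ (the minimization version of the threshold problem, which is the maximization problem played by Adam). By the result of Gimbert and Zielonka~\cite{gz04}, positional strategies suffice for both players in such classical total-sum threshold games, so whenever Eve wins she in fact has a positional, and therefore finite memory, winning strategy.

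The main obstacle, if one exists at all, is merely bookkeeping: enumerating the unbounded interval shapes and confirming each corresponds to a standard threshold objective for $\underline{Total}$. Since $\underline{Total}$ is prefix-independent and each remaining interval is a one-sided half-line, this reduction is immediate, and the bounded case is fully absorbed by the previous lemma, so no new argument beyond assembling these two ingredients is needed.
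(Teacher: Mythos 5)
Your proof is correct and follows essentially the same route as the paper: the bounded case is handled by the preceding lemma (since $I \cap \mathbb{Z}$ is then finite), and the unbounded case is observed to be the classical total-sum threshold problem, for which positional strategies suffice by Gimbert and Zielonka~\cite{gz04}. The only difference is your explicit enumeration of the half-line shapes, which the paper leaves implicit.
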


\end{document}